\newcommand{\red}[1]{\textcolor{black}{#1}}
\newcommand{\N}{{\mathbb N}}
\newcommand{\Z}{{\mathbb Z}}
\newcommand{\C}{{\mathbb C}}
\def\normOrd#1{\mathop{:}\nolimits\!#1\!\mathop{:}\nolimits}
\def\normOrdx#1{\mathop{:}\nolimits\!#1\!\mathop{:}\nolimits}
\newtheorem{theorem}{Theorem}
\newtheorem{lemma}{Lemma}
\newtheorem{proposition}{Proposition}
\newtheorem{definition}{Definition}
\newtheorem{remark}{Remark}
\newcommand{\proof}{\noindent \textit{Proof. }}
\newcommand{\qed}{\hfill $\Box$}
\title{Generalized Wick Theorems
in Conformal Field Theory and the Borcherds Identity}
\author{Taichiro Takagi, Takuma Yoshikawa}
\begin{document}
\maketitle

\section{Introduction}\label{sec:1}
More than thirty years two dimensional (2D) conformal field theory (CFT) has been widely
accepted as a distinguished theory for giving mathematical foundations of
string theory and for providing a language to neatly describe 2D critical phenomena in
statistical physics.
As one of many simple and beautiful formulas in CFT
the following one is known as the generalized Wick theorem 
for interacting fields \cite{BBSS88,STP88,DMS97},
and has been used extensively in many literatures:
\begin{equation}
\contraction{}{A}{(z)(}{BC}
A(z)(BC)(w) 
= \frac{1}{2 \pi \sqrt{-1}}
\oint_{C_w} \frac{d x}{x-w} \left\{\contraction{}{A}{(z)}{B}
A(z)B(x) C(w)+ B(x) \contraction{}{A}{(z)}{C}A(z) C(w) \right\}. \label{eq:july3_1}
\end{equation}
Here, $A(z)$ etc.~are operators for chiral conformal fields,
$(BC)(w) = \normOrd{B(w)C(w)}$ denotes the normally ordered product,
{$\contraction{}{A}{(z)}{B}A(z)B(x)$} denotes the contraction, i.~e.~the singular part of the operator product expansion (OPE), and
$C_w$ is a contour encircling the point $w$ with an infinitesimal radius.
Also, the radial orderings of the operators are implicitly assumed.

It is a natural question to ask whether an {\em analogous} expression exists if we contract with the normally ordered operator on the left,
\red{where distinction of the left and the right is significant because we are dealing with interacting fields.}
In order to make the meaning of the question clearer, 
we want to specify the meaning of the word {\em analogous} here.
The formula \eqref{eq:july3_1} is describing
a relation among three contractions, and a contraction is a sort of
generating functions of the coefficient operators arisen in it.
The use of contractions has brought \eqref{eq:july3_1} to being an expression that most physicists are more familiar with than the other expressions
\cite{Th95, Kac96, MN97}. 
In this sense, as far as the authors know no such integral expression 
analogous to \eqref{eq:july3_1} \red{for the opposite order OPE}
has been written in
any literature, admitting that there have been equivalent formulas by a series of relations among the coefficient operators themselves \cite{Th95, MN97},
or by using another type of generating functions
that are not the contractions \cite{BK03, KRW03}.
The purpose of this paper is to answer this question
by showing that there is certainly such an integral formula analogous to \eqref{eq:july3_1} which is given by
\begin{equation}
\contraction{}{(AB)}{(z)(}{C}
(AB)(z)C(w) 
= \frac{1}{2 \pi \sqrt{-1}}
\oint_{C_w} \frac{d x}{z-x} \left\{
\normOrdx{A(x) \contraction{}{B}{(x)}{C}B(x)C(w)} + B(x) \contraction{}{A}{(x)}{C}A(x) C(w) \right\}. \label{eq:main}
\end{equation}
Here we note that the expansion coefficients of
$\contraction{}{B}{(x)}{C}B(x)C(w)$ and $\contraction{}{A}{(x)}{C}A(x) C(w)$
are fields at the point $w$,
hence it makes sense to consider operator products of a field at the point $x$ such as 
$A(x)$ or $B(x)$ and them.
The normal ordering $\normOrdx{ \quad }$ for the first term in the integrand
is to subtract the singular terms from such 
OPEs.
We will discuss relations of these formulas with 
specializations of the Borcherds identity
that is an axiom for a vertex algebra \cite{B86}.

The main results of this paper are Theorems \ref{th:2} 
and \ref{th:4}.
While an algebraic method is used for the former, we use a rigorous 
analytic method developed in Ref.~\citen{MN97} for the latter.
Besides it enables us to
derive the Wick theorems in a heuristic way,
we have another reason for dealing with the analytic method. 
That is, in a sense one had been forced to use the other types of
expressions \cite{MN97, BK03, KRW03} for describing those theorems 
rigorously.
Probably,
this is because of the difficulty of giving a mathematically rigorous 
definition of the integrals of operators themselves
with respect to their arguments viewed as integration variables.
In this paper we try to describe
the generalized Wick theorems 
not only 
in a way that respects its original contour integral
expression as in Ref.~\citen{BBSS88},
but also in a way that respects mathematical rigor by means of 
the above mentioned analytic method.

This paper is organized as follows:
In Sect.~\ref{sec:2} we review an algebraic formulation of 2D chiral quantum fields based on Matsuo and Nagatomo \cite{MN97}.
The purpose of this section is to introduce several notions for
giving a mathematically rigorous formulation of
the generalized Wick theorems \eqref{eq:july3_1} and \eqref{eq:main}.
%
In Sect.~\ref{sec:3} we show that
the generalized Wick theorems are equivalent to
special cases of the Borcherds identity.
So far, the contour integrals of the operator valued functions
appearing in \eqref{eq:july3_1} and \eqref{eq:main} are interpreted as
formal symbols for algebraic manipulations, rather than as
rigorous analytic calculations.
In contrast, in Sect.~\ref{sec:4} we consider matrix elements of the 
product of the operators and interpret these integrals as
analytic calculations.
%
%
%
We give several discussions in Sect.~\ref{sec:6}.
In Appendix~\ref{app:A} we briefly 
discuss the relation between our formulas \eqref{eq:july3_1}, \eqref{eq:main} and 
the formulas expressed by
another type of generating functions
in Refs. \citen{BK03, KRW03}.
In Appendix~\ref{sec:5} the usefulness of our new formula is illustrated by a few examples

\section{An Algebraic Formulation of Two-dimensional Chiral Quantum Fields
and Their Operator Product Expansions}\label{sec:2}
\subsection{Formal series, fields and their residue products}\label{sec:2_1}
In order to express the statement of the Wick theorems in a mathematically rigorous manner, 
we want to present an algebraic formulation of 2D chiral quantum fields.
Though similar formulations are provided in many literatures \cite{FBZ04,FLM88, Kac96,Kac15,MS08}, we adopt the one
by Matsuo and Nagatomo \cite{MN97} which we shall briefly review in this section.

First we introduce the notions of a field and a normally ordered product.
Let
\begin{equation}
A(z) = \sum_{n \in \mathbb{Z}} A_n z^{-n-1}\label{eq:july6_1}
\end{equation}
be a formal series in which the coefficients $A_n$ are linear transformations
on some vector space $M$.
The set of all such series is denoted by $({\rm End}\, M) [[z,z^{-1}]]$.
We call $A(z)$ a field (of one variable) if for any $b \in M$ there exists $n_0 \in \Z$ such that
$A_n b = 0$ for all $n \geq n_0$.
Also for any set of linear transformations
$\{ A_{p,q} \}_{p,q \in \mathbb{Z}}$ we consider the following formal series
\begin{equation}\label{eq:sep4_2}
A(y,z) = \sum_{p,q \in \mathbb{Z}} A_{p,q} y^{-p-1} z^{-q-1},
\end{equation}
and denote the set of all such series by $({\rm End}\, M) [[y,y^{-1},z,z^{-1}]]$.
We call $A(y,z)$ a field of two variables if for any $v \in M$
there exist $p_0, q_0 \in \mathbb{Z}$ such that if  $p \geq p_0$ or $q \geq q_0$ then $A_{p,q} v = 0$
(namely, if $p \geq p_0$ then $A_{p,q} v = 0$, and if $q \geq q_0$ then $A_{p,q} v = 0$).
If $A(y,z)$ is a field of two variables, then $A(z,z)$ is a field of one variable.
Actually we have
\begin{displaymath}
A(z,z) = \sum_{n \in \mathbb{Z}} \left( \sum_{p \in \mathbb{Z}}A_{p,n-1-p} \right)  z^{-n-1},
\end{displaymath}
that makes sense as a linear transformation on $M$.
This is because for any $v \in M$ we have $A_{p,n-1-p} v=0$ for large enough $|p|$, \red{and
for any $v \in M$ we have $\sum_{p \in \mathbb{Z}} A_{p,n-1-p} v =0$ for large enough $n$
because $p \geq p_0$ or $n-1-p \geq q_0$ is satisfied for any $p \in \mathbb{Z}$
in the subscripts of $A_{p,n-1-p}$.}

Even if both $A(y)$ and $B(z)$ are fields,
their product $A(y)B(z)$ is not necessarily a field.
However we have another type of product which turns out to be a field:
\begin{definition}
Given two series $A(y)$ and $B(z)$,
their \textbf{normally ordered product} is
\begin{equation}
\normOrd{A(y)B(z)} = A(y)_{-} B(z) + B(z) A(y)_{+},
\label{eq:july3_3}
\end{equation}
where
$A(y)_{+}=\sum_{n \geq 0} A_n y^{-n-1}, A(y)_{-}=\sum_{n < 0} A_n y^{-n-1}$.
\end{definition}
\red{It is a field of two variables if both $A(y)$ and $B(z)$ are fields,
and hence
\begin{equation*}
(AB)(z) = \normOrd{A(z)B(z)},
\end{equation*}
is a field of one variables.}
We call both $\normOrd{A(y)B(z)}$ and $\normOrd{A(z)B(z)}$ normally ordered products.

Now we introduce the notion of a residue product.
It is a generalization of the normally ordered product and
plays an essential role in this paper. 
\begin{definition}[Ref.~\citen{MN97}, Definition 1.4.1.]
Given two series $A(y)$ and $B(z)$,
their \textbf{ $\bm{m}$-th residue product} ($m \in \Z$) is
\begin{equation}
A(z)_{(m)}B(z) =
{\rm Res}_{y=0} A(y)B(z) (y-z)^m \big|_{|y|>|z|}-
{\rm Res}_{y=0} B(z)A(y) (y-z)^m \big|_{|y|<|z|},
\label{eq:oct20_1}
\end{equation}
where
\begin{eqnarray*}
(y-z)^m \big|_{|y|>|z|} &=& \sum_{i=0}^\infty (-1)^i { m \choose i } y^{m-i} z^i,\\
(y-z)^m \big|_{|y|<|z|}  &=& \sum_{i=0}^\infty (-1)^{m+i} { m \choose i } y^{i} z^{m-i}.
\end{eqnarray*}
\end{definition}
\begin{remark}
The $(y-z)^m \big|_{|y|>|z|}$ and $(y-z)^m \big|_{|y|<|z|}$ 
are formal series obtained by expanding the rational function $(y-z)^m$
into those convergent in the region $|y| > |z|$ and $|z|>|y|$ respectively. 
\end{remark}
One can prove that if both $A(z)$ and $B(z)$ are fields, then 
each term of the right hand side of \eqref{eq:oct20_1} becomes a field and hence
so is $A(z)_{(m)}B(z)$ \cite{MN97}.
In particular we have
\begin{equation*}
A(z)_{(-1)}B(z) =\normOrd{A(z)B(z)}.
\end{equation*}
In other words the $-1$th residue product is the
normally ordered product.
\subsection{Operator product expansion}\label{sec:2_2}
Now we present a mathematical formulation of the operator product
expansion of the quantum fields.
As we will see, the Wick formula \eqref{eq:july3_1} is
proved to be equivalent to a special case of the Borcherds identity without the assumption of locality (Theorem \ref{th:1}).
So, in this subsection we introduce the notion of OPE without it \cite{LZ95}.
The following argument is essentially the same as those in the proof
of Proposition 2.3 in Ref.~\citen{LZ95}, but for the
purpose of confirming our notations we present it here.

When $m \geq 0$, 
we have $(y-z)^m \big|_{|y|>|z|} =(y-z)^m \big|_{|y|<|z|} = \sum_{i=0}^m {m \choose i}
y^i (-z)^{m-i}$, hence \eqref{eq:oct20_1} is written as
\begin{eqnarray*}
A(z)_{(m)}B(z) 
&=&{\rm Res}_{y=0}  [A(y), B(z)] (y-z)^m \\
&=& [\sum_{i=0}^m {m \choose i} {\rm Res}_{y=0} A(y) y^i (-z)^{m-i}, B(z)]\\
&=& \sum_{i=0}^m {m \choose i} (-z)^{m-i}[A_i, B(z)],
\end{eqnarray*}
therefore
\begin{equation}
\frac{A(z)_{(m)}B(z)}{z^m} = \sum_{i=0}^m {m \choose i} (-1)^{m-i}\frac{[A_i, B(z)]}{z^i}.
\end{equation}
This relation is inverted as (e.g. Ref.~\citen{Stanley97}, p.~66)
\begin{equation}
\frac{[A_m,B(z)]}{z^m} = \sum_{i=0}^m {m \choose i} \frac{A(z)_{(i)} B(z)}{z^i}.
\end{equation}
By this relation we have
\begin{eqnarray*}
[ A(y)_+, B(z)] &=& \sum_{m=0}^\infty y^{-m-1}[A_m,B(z)]\\
&=& \sum_{m=0}^\infty y^{-m-1} \sum_{i=0}^m
 {m \choose i} A(z)_{(i)} B(z) z^{m-i} \\
&=& 
 \sum_{i=0}^\infty A(z)_{(i)} B(z) \sum_{m=i}^\infty  {m \choose i} y^{-m-1}z^{m-i},
\end{eqnarray*}
in which the inner summation in the last expression can be written as
\begin{equation*}
\sum_{m=i}^\infty  {m \choose i} y^{-m-1}z^{m-i} =
\partial_z^{(i)} \sum_{m=0}^\infty  y^{-m-1}z^{m} =
(y-z)^{-i-1}\big|_{|y|>|z|}.
\end{equation*}
Here, $\partial_z^{(i)} = \frac{1}{i !} \partial_z^i$.
Since the commutator $[ A(y)_+, B(z)]$ is equal to the difference of $A(y)B(z)$
and $\normOrd{A(y)B(z)}$
we have:
\begin{proposition}[Ref.~\citen{MN97}, Remark 2.2.2: Ref.~\citen{LZ95}, Proposition 2.3]\label{th:july1_1}
For any series $A(y)$ and $B(z)$
the following relation holds.
\begin{equation}
A(y)B(z) = \sum_{i=0}^{\infty} A(z)_{(i)}B(z) \, (y-z)^{-i-1}\big|_{|y|>|z|}
+\normOrd{A(y)B(z)} \label{eq:july1_2}
\end{equation}
\end{proposition}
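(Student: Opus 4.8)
The plan is to reduce the statement to the commutator computation that immediately precedes it. The only content of \eqref{eq:july1_2} beyond that computation is the identification of $A(y)B(z) - \normOrd{A(y)B(z)}$ with the commutator $[A(y)_+, B(z)]$; once this is established, the displayed evaluation of $[A(y)_+, B(z)]$ furnishes the claimed sum of residue products and the proof is complete.

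First I would split $A(y)$ into its two pieces $A(y) = A(y)_- + A(y)_+$, so that $A(y)B(z) = A(y)_- B(z) + A(y)_+ B(z)$. Comparing with the definition \eqref{eq:july3_3}, namely $\normOrd{A(y)B(z)} = A(y)_- B(z) + B(z) A(y)_+$, the common term $A(y)_- B(z)$ cancels upon subtraction and one is left with $A(y)B(z) - \normOrd{A(y)B(z)} = A(y)_+ B(z) - B(z) A(y)_+ = [A(y)_+, B(z)]$. This is the crux of the argument: the normally ordered product is arranged precisely so that its difference from the genuine product is the commutator of the ``positive mode'' part of $A$ with $B$, i.e.\ the singular-in-$y$ contribution.

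It then remains to insert the evaluation of $[A(y)_+, B(z)]$ carried out just above, which expresses it as $\sum_{i=0}^\infty A(z)_{(i)} B(z)\,(y-z)^{-i-1}\big|_{|y|>|z|}$, and \eqref{eq:july1_2} follows by moving $\normOrd{A(y)B(z)}$ back to the right-hand side. The one step requiring care is the interchange of the double summation over $m$ and $i$ and the resummation of the inner series $\sum_{m \geq i} \binom{m}{i} y^{-m-1} z^{m-i}$ into the expansion $(y-z)^{-i-1}\big|_{|y|>|z|}$, which is justified by recognizing the inner sum as $\partial_z^{(i)}$ applied to the geometric series $\sum_{m \geq 0} y^{-m-1} z^m$. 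I expect no genuine obstacle here: since only non-negative residue products $A(z)_{(i)} B(z)$ with $i \geq 0$ occur, each of which is a \emph{finite} combination of commutators $[A_j, B(z)]$, every rearrangement is legitimate term by term, and no assumption that $A(y)$ and $B(z)$ are fields is needed, consistent with the stated generality ``for any series.''
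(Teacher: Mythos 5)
Your proposal is correct and follows essentially the same route as the paper: the paper's proof is precisely the chain of computations displayed just before the proposition (inverting the binomial relation between $A(z)_{(m)}B(z)$ and $[A_m,B(z)]$ and resumming to get $[A(y)_+,B(z)]=\sum_{i\geq 0}A(z)_{(i)}B(z)\,(y-z)^{-i-1}\big|_{|y|>|z|}$), and you correctly supply the one step the paper leaves implicit, namely that $A(y)B(z)-\normOrd{A(y)B(z)}=[A(y)_+,B(z)]$ by the definition \eqref{eq:july3_3}. Your remark that only finitely many terms contribute to each coefficient, so that no field hypothesis is needed, is also consistent with the paper's stated generality.
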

This is an identity in
$({\rm End}\, M) [[y,y^{-1},z,z^{-1}]]$.
Based on this expression it is conventional to
introduce the following:
\begin{definition}\label{def:contraction}
For any series $A(y)$ and $B(z)$ their \textbf{contraction} is
\begin{equation}
\bcontraction{}{A}{(y)}{B} A(y)B(z) =
\contraction{}{A}{(y)}{B} A(y)B(z) =  \sum_{i=0}^{\infty} \frac{A(z)_{(i)}B(z)}{(y-z)^{i+1}}.
\label{eq:july3_2}
\end{equation}
\end{definition}
It is
an element of  $({\rm End}\, M) [[z,z^{-1}]][[(y-z)^{-1}]]$.
In Refs.~\citen{Kac96,MN97}
the correspondence $A(y)B(z)  \sim \contraction{}{A}{(y)}{B} A(y)B(z)$
is called the OPE.
\red{However, we use the following one instead:}
\begin{definition}\label{def:OPE}
Given two series $A(y)$ and $B(z)$,
their \textbf{OPE} is the sum of their contraction and their normally ordered product, which we denote by
\begin{equation}\label{eq:july29_3}
R(A(y)B(z)) := \contraction{}{A}{(y)}{B} A(y)B(z) + \normOrd{A(y)B(z)}.
\end{equation}
\end{definition}
It is an element of  $({\rm End}\, M) [[y,y^{-1},z,z^{-1}]][[(y-z)^{-1}]]$.
The notation $R$ stands for the radial ordering in the case of mutually local fields.
See Remark \ref{rem:mar29_8} below.
\subsection{Locality}\label{sec:2_3}
We now review the notion of locality \cite{MN97}.
This notion is important for us because the Wick formula \eqref{eq:main} is
proved to be equivalent to a special case of the Borcherds identity with the assumption of locality (Theorem \ref{th:2}).
Let $A(y), B(z)$ be elements of  $({\rm End}\, M) [[y,y^{-1}]]$ and
 $({\rm End}\, M) [[z,z^{-1}]]$ respectively.
Then their products $A(y)B(z)$ and $B(z)A(y)$ are
\red{not generally equal in} the set $({\rm End}\, M) [[y,y^{-1},z,z^{-1}]]$.
\red{Even in this case, it may be possible that} there exists a non-negative integer $m$ and the relation
\begin{equation}
A(y)B(z)(y-z)^m = B(z)A(y) (y-z)^m\label{eq:july6_3}
\end{equation}
is satisfied.
Then $A(z)$ and $B(z)$ are called mutually local.
If $m$ is the smallest non-negative integer such that the equality \eqref{eq:july6_3} holds,
then we say that the order of the locality of $A(z)$ and $B(z)$ is $m$.
By using the notation of the commutator $[A(y), B(z)]:=A(y)B(z)-B(z)A(y)$,
the condition is obviously written as $[A(y), B(z)](y-z)^m = 0$.
It should be emphasized that this condition does not necessarily implies
$[A(y), B(z)] = 0$ because $A(y), B(z)$ are not analytic functions
but just formal series.
However we have the following:
\begin{lemma}[Ref.~\citen{MN97}, Lemma 1.1.1.]\label{lem:mar17_5}
Let $a(y,z)$ be a series with only finitely many terms of negative or positive degree in $y$ or $z$.
If $a(y,z)$ satisfies $a(y,z) (y-z)^m =0$ for some nonnegative integer $m$,
then $a(y,z)=0$.
\end{lemma}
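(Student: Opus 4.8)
The plan is to extract coefficients and reduce the whole statement to the single fact that multiplication by $(y-z)$ is injective on series having only finitely many nonzero terms. First I would unpack the hypothesis. Writing $a(y,z) = \sum_{p,q \in \Z} a_{p,q}\, y^p z^q$, the assumption that only finitely many terms have negative or positive degree in $y$, and likewise in $z$, forces all but finitely many $a_{p,q}$ to vanish: the terms with $p \neq 0$ are finite in number, as are those with $q \neq 0$, leaving at most one further term with $p = q = 0$. Thus $a(y,z)$ is a genuine (finite) Laurent polynomial, and all summations appearing below are finite.

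Next I would treat the case $m = 1$ directly by comparing coefficients. Expanding,
\begin{equation*}
a(y,z)(y-z) = \sum_{p,q} a_{p,q}\, y^{p+1} z^q - \sum_{p,q} a_{p,q}\, y^{p} z^{q+1},
\end{equation*}
and reading off the coefficient of $y^r z^s$ yields $a_{r-1,s} - a_{r,s-1} = 0$ for every $r,s \in \Z$. Hence $a_{p,q}$ is constant along each anti-diagonal $p + q = \text{const}$. Since each such anti-diagonal is infinite while only finitely many of the $a_{p,q}$ are nonzero, this common value must be $0$; therefore every $a_{p,q}$ vanishes and $a(y,z) = 0$.

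Finally I would handle general $m$ by induction, the statement being trivial for $m = 0$. For the inductive step I observe that $b(y,z) := a(y,z)(y-z)$ is again a finite Laurent polynomial satisfying $b(y,z)(y-z)^{m-1} = 0$; the induction hypothesis gives $b = 0$, whereupon the already established case $m = 1$ forces $a(y,z) = 0$.

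The point that requires care is that the coefficients $a_{p,q}$ live in the noncommutative ring $\mathrm{End}\,M$, which has zero divisors, so I cannot simply invoke that the Laurent polynomial ring is an integral domain and conclude from $(y-z)^m \neq 0$ that $a = 0$. The coefficient-by-coefficient argument above sidesteps this difficulty entirely, since it uses only additive cancellation of the individual $a_{p,q}$ and therefore works verbatim over any abelian group of coefficients.
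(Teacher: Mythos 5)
The paper itself gives no proof of this lemma --- it is quoted from Matsuo--Nagatomo \cite{MN97} --- so your argument can only be judged on its own merits and against how the lemma is actually used later in the paper. Your central device (extract the coefficient of $y^r z^s$ from $a(y,z)(y-z)=0$ to get $a_{r-1,s}=a_{r,s-1}$, hence constancy along anti-diagonals, then induct on $m$) is exactly the right one, and your remark about not being able to invoke integral-domain reasoning over $\mathrm{End}\,M$ is well taken. The problem is your reading of the hypothesis. You interpret ``only finitely many terms of negative or positive degree in $y$ or $z$'' conjunctively, concluding that $a(y,z)$ is a finite Laurent polynomial. The intended reading is disjunctive: at least one of the four one-sided conditions holds (the $y$-degrees appearing are bounded above, or bounded below, or likewise for $z$). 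This matters because the lemma is applied in the paper to series that are emphatically not finite Laurent polynomials: in the proof of Proposition \ref{th:mar15_1} the relevant series merely has no terms of negative degree in $y$, and in the proof of Proposition \ref{prop:mar18_1} the series $F_i$ have only finitely many terms of positive degree in $z_1$ and of negative degree in $z_3$ but are otherwise unconstrained. As written, your proof establishes only a special case that does not cover these applications, so there is a genuine gap. (Note also that some such hypothesis is essential: the formal delta function $\sum_{n\in\Z} y^n z^{-n-1}$ is killed by $(y-z)$ but is nonzero; it is supported on a single anti-diagonal and violates all four boundedness conditions.)

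The good news is that your argument survives the correction almost verbatim. Constancy along anti-diagonals only requires that each anti-diagonal $\{(p,q): p+q=c\}$ contain at least one index at which $a_{p,q}=0$, and any one of the four one-sided boundedness conditions guarantees this (e.g.\ if the $y$-degrees are bounded below, walk along the anti-diagonal toward $p\to-\infty$). You should restate the hypothesis correctly, replace ``only finitely many of the $a_{p,q}$ are nonzero'' by ``each anti-diagonal eventually leaves the support of $a$,'' and check in the induction step that multiplication by $(y-z)$ preserves whichever one-sided boundedness condition you are assuming (it does). With those changes the proof is complete and covers the cases the paper actually needs.
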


The following proposition on the OPE with the assumption of locality
will be used in the arguments in Sect.~\ref{sec:4}.
\begin{proposition}[Ref.~\citen{Kac96}, Theorem 2.3, Ref.~\citen{MN97}, Theorem 2.2.1]\label{th:mar15_1}
Let $A(y)$ and $B(z)$ be series on a vector space. 
They are mutually local if and only if both
\begin{equation}
A(y)B(z) = \sum_{i=0}^{m-1} A(z)_{(i)}B(z) \, (y-z)^{-i-1}\big|_{|y|>|z|}
+\normOrd{A(y)B(z)},  \label{eq:mar29_5} 
\end{equation}
and 
\begin{equation}
B(z)A(y) = \sum_{i=0}^{m-1} A(z)_{(i)}B(z) \, (y-z)^{-i-1}\big|_{|y|<|z|}
+\normOrd{A(y)B(z)}, \label{eq:mar29_6}
\end{equation}
hold for some $m \in \mathbb{N}$.
\end{proposition}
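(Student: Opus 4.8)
The plan is to deduce both expansions from the single, unconditionally valid expansion \eqref{eq:july1_2} of Proposition \ref{th:july1_1}, invoking locality only to truncate the infinite sum and, afterwards, to cancel a common factor $(y-z)^m$ by means of Lemma \ref{lem:mar17_5}; the converse will then follow from the formal delta function. I treat the two implications in turn.

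For the forward implication I assume the order of locality is at most $m$, i.e.\ $[A(y),B(z)](y-z)^m=0$. Recalling from the computation preceding Proposition \ref{th:july1_1} that $A(z)_{(i)}B(z)={\rm Res}_{y=0}[A(y),B(z)](y-z)^i$ for every $i\ge0$, and noting that $[A(y),B(z)](y-z)^i=[A(y),B(z)](y-z)^m(y-z)^{i-m}=0$ whenever $i\ge m$, I conclude $A(z)_{(i)}B(z)=0$ for all $i\ge m$. Substituting this into \eqref{eq:july1_2} truncates the sum and produces \eqref{eq:mar29_5} at once.

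To reach \eqref{eq:mar29_6} I would multiply \eqref{eq:mar29_5} by $(y-z)^m$. For $0\le i\le m-1$ the exponent $m-i-1$ is non-negative, so $(y-z)^{m-i-1}\big|_{|y|>|z|}$ and $(y-z)^{m-i-1}\big|_{|y|<|z|}$ are one and the same genuine polynomial; hence the right-hand side so obtained does not distinguish the two regions. Replacing $A(y)B(z)(y-z)^m$ by $B(z)A(y)(y-z)^m$ on the left by locality, and performing the same multiplication on the claimed right-hand side $S$ of \eqref{eq:mar29_6}, I find that $B(z)A(y)(y-z)^m$ and $S(y-z)^m$ reduce to the identical expression. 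Thus $\bigl(B(z)A(y)-S\bigr)(y-z)^m=0$, and after acting on an arbitrary $v\in M$ Lemma \ref{lem:mar17_5} lets me cancel $(y-z)^m$ to obtain $B(z)A(y)=S$, which is \eqref{eq:mar29_6}.

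For the converse I assume \eqref{eq:mar29_5} and \eqref{eq:mar29_6} hold for some $m$ and subtract them. Writing $\delta(y,z)=\sum_{n\in\Z}y^{-n-1}z^n$ and using $(y-z)^{-i-1}\big|_{|y|>|z|}-(y-z)^{-i-1}\big|_{|y|<|z|}=\partial_z^{(i)}\delta(y,z)$, the subtraction yields the commutator formula
\[
[A(y),B(z)]=\sum_{i=0}^{m-1}A(z)_{(i)}B(z)\,\partial_z^{(i)}\delta(y,z).
\]
Multiplying by $(y-z)^m$ and using the standard identity $(y-z)^m\partial_z^{(i)}\delta(y,z)=0$, valid for every $0\le i\le m-1$, annihilates the whole right-hand side, so $[A(y),B(z)](y-z)^m=0$ and $A(z),B(z)$ are mutually local. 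The step I expect to be the main obstacle is the cancellation of $(y-z)^m$ in the forward direction: the product $B(z)A(y)-S$ is a legitimate series only once it acts on a vector, and Lemma \ref{lem:mar17_5} applies only because the field axioms guarantee that $B(z)A(y)v$ and $Sv$ each contain only finitely many terms of negative degree in $y$. Checking this one-sided finiteness, together with the verification that the two regional expansions of the non-negative powers of $(y-z)$ genuinely coincide, is the delicate core of the proof, while everything else is bookkeeping with residue products and the formal delta function.
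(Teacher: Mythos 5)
Your proof is correct and follows essentially the same route as the paper's: truncate the expansion of Proposition \ref{th:july1_1} using $A(z)_{(i)}B(z)=0$ for $i\ge m$, obtain \eqref{eq:mar29_6} by multiplying by $(y-z)^m$ and cancelling that factor via Lemma \ref{lem:mar17_5}, and get the converse by multiplying both expansions by $(y-z)^m$ so that their right-hand sides coincide. The one cosmetic difference is in justifying the hypothesis of Lemma \ref{lem:mar17_5}: the paper notes that the relevant difference equals $-[A(y)_-,B(z)]-\sum_{i=0}^{m-1}A(z)_{(i)}B(z)\,(y-z)^{-i-1}\big|_{|y|<|z|}$, which contains no negative powers of $y$ at all (so no field axiom is needed, consistent with the statement being about mere series), whereas you bound $B(z)A(y)v$ and $Sv$ separately by appealing to the field property.
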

Although proofs are available in the original papers\cite{Kac96, MN97},
we present another heuristic one here under the assumption of Proposition \ref{th:july1_1}.
\par\noindent
\proof
If we multiply $(y-z)^m$ to \eqref{eq:mar29_5} and \eqref{eq:mar29_6},
then their right hand sides are equal to each other
because we have $(y-z)^m (y-z)^{-i-1}\big|_{|y|>|z|}= (y-z)^m(y-z)^{-i-1}\big|_{|y|<|z|}$.
Hence if these two equations are satisfied, then $A(z)$ and $B(z)$ are mutually local.
Conversely, suppose these fields are mutually local,
and let $m$ be the order of locality.
Then by the definition of the residue product \eqref{eq:oct20_1}
we have $A(z)_{(i)}B(z) = 0$ for $i \geq m$.
Hence \eqref{eq:mar29_5} follows from Proposition \ref{th:july1_1}.
On the other hand, by the assumption of locality $[A(y), B(z)](y-z)^m = 
[A(y)_+ + A(y)_-, B(z)] (y-z)^m=0$ 
we obtain the following relation:
\begin{align}
&\left\{\hphantom{-}
[A(y)_+, B(z)] - \sum_{i=0}^{m-1} A(z)_{(i)}B(z) \, (y-z)^{-i-1}\big|_{|y|>|z|}
\right\}(y-z)^m \nonumber\\
=&\left\{
-[A(y)_-, B(z)] - \sum_{i=0}^{m-1} A(z)_{(i)}B(z) \, (y-z)^{-i-1}\big|_{|y|<|z|}
\right\}(y-z)^m. \label{eq:mar29_7}
\end{align}
Since \eqref{eq:mar29_5} has already been proved, 
one sees that the left hand side of this expression vanishes.
Hence one sees that \eqref{eq:mar29_6} follows by using
Lemma \ref{lem:mar17_5},
because
there is no term of negative degree in $y$ in the expression
between the braces in the right hand side of \eqref{eq:mar29_7},
and by noting that $-[ A(y)_-, B(z)]$ is equal to the difference of $B(z)A(y)$
and $\normOrd{A(y)B(z)}$.
\qed

\begin{remark}\label{rem:mar29_8}
In this section the operators are defined as
formal series and 
we are not allowed to
substitute any real or complex numbers into
the symbols $y$ and $z$.
However, in Sect.~\ref{sec:4} we will introduce an analytic formulation
in which these variables can take values in complex numbers.
Then, from Proposition \ref{th:mar15_1} we see that
the symbol $R$ in Definition \ref{def:OPE} 
can be interpreted as the {radial ordering} of the operators.
This means that in any correlation functions involving mutually local
operators $A(y),B(z)$ we can replace
$R(A(y)B(z))$ by $A(y)B(z)$ (resp.~$B(z)A(y)$) if
the condition $|y|>|z|$ (resp.~$|y|<|z|$) is satisfied.
\end{remark}

We now present the notion of locality between more than two fields.
Say series $A^{1}(z),\dots, A^{\ell}(z)$ are mutually local if
all the distinct pairs $A^{i}(z),  A^{j}(z), (i \ne j)$ are mutually local.
The following proposition is known as the Dong's lemma \cite{Kac96}.
\begin{proposition}[Ref.~\citen{MN97}, Proposition 2.1.5.]\label{prop:mar17_3}
If $A(z), B(z)$ and $C(z)$ are mutually local fields, then $A(z)_{(m)}B(z)$ and $C(z)$ are
local.
\end{proposition}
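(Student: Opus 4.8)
The plan is to reduce the statement to exhibiting a single nonnegative integer $N$ such that $(w-z)^{N}\big[\,A(w)_{(m)}B(w),\,C(z)\,\big]=0$, since this is precisely the definition of mutual locality of $A_{(m)}B$ and $C$. I would first record the three hypotheses as locality orders: fix nonnegative integers $r,s,t$ with $(x-w)^{r}[A(x),B(w)]=0$, $(x-z)^{s}[A(x),C(z)]=0$, and $(w-z)^{t}[B(w),C(z)]=0$. The starting point is the residue-product definition \eqref{eq:oct20_1}, written with $x$ as the internal variable and $w$ as the point of the field, namely $A(w)_{(m)}B(w)={\rm Res}_{x}\big((x-w)^{m}\big|_{|x|>|w|}A(x)B(w)-(x-w)^{m}\big|_{|x|<|w|}B(w)A(x)\big)$. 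Because each term is a field, I may bring the commutator with $C(z)$ inside ${\rm Res}_{x}$ and proceed formally in $({\rm End}\,M)[[w,w^{-1},z,z^{-1}]][[(w-z)^{-1}]]$.

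Next I would split by the Leibniz rule $[A(x)B(w),C(z)]=[A(x),C(z)]B(w)+A(x)[B(w),C(z)]$ and similarly for $[B(w)A(x),C(z)]$, collecting the result into a ``$B$--$C$ group'' carrying the factor $[B(w),C(z)]$ and an ``$A$--$C$ group'' carrying $[A(x),C(z)]$. The $B$--$C$ group equals ${\rm Res}_{x}\big((x-w)^{m}\big|_{|x|>|w|}A(x)[B(w),C(z)]-(x-w)^{m}\big|_{|x|<|w|}[B(w),C(z)]A(x)\big)$, and since $(w-z)^{t}$ commutes with ${\rm Res}_{x}$ and with $A(x)$, multiplication by $(w-z)^{t}$ annihilates it at once.

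The $A$--$C$ group is the crux, and here I would take $N=t+L$ with $L=\max(0,\,r+s-m-1)$ and split $(w-z)^{N}=(w-z)^{t}(w-z)^{L}$, keeping the factor $(w-z)^{t}$ in reserve. Acting with the remaining $(w-z)^{L}$ on $[A(x),C(z)]$ and substituting $w-z=(x-z)-(x-w)$, the relation $(x-z)^{s}[A(x),C(z)]=0$ kills every term except those with $(x-w)^{k}$ for $k\ge L-s+1$; consequently each surviving power is $(x-w)^{m+k}$ with $m+k\ge r\ge 0$. Since these exponents are nonnegative, the two boundary expansions coincide with the genuine polynomial $(x-w)^{m+k}$, so the two pieces of the $A$--$C$ group recombine under ${\rm Res}_{x}$ into the double commutator $[[A(x),C(z)],B(w)]$. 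Applying the Jacobi identity $[[A(x),C(z)],B(w)]=[[A(x),B(w)],C(z)]-[A(x),[B(w),C(z)]]$, the first summand is annihilated by $(x-w)^{m+k}$ through $A$--$B$ locality, while the second is annihilated by the reserved $(w-z)^{t}$ through $B$--$C$ locality, because $(w-z)^{t}$ passes through $A(x)$ and $[A(x),\,(w-z)^{t}[B(w),C(z)]\,]=0$. Thus $(w-z)^{N}$ kills both groups, and hence the full commutator, which is the desired locality.

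The one genuinely delicate point is this $A$--$C$ step: the available locality is in the ``wrong'' variable $(x-z)$, so I must trade powers of $(w-z)$ for powers of $(x-w)$ while deliberately holding a reserve $(w-z)^{t}$ back for the residual $B$--$C$ commutator that the Jacobi identity produces; getting the two book-keepings to cooperate is what fixes the explicit order $N$. I would also keep an eye on the case $m<0$, where $(x-w)^{m}\big|_{|x|>|w|}$ and $(x-w)^{m}\big|_{|x|<|w|}$ genuinely differ, and verify that they reconcile to a single polynomial exactly once the surviving exponents $m+k$ become nonnegative. For the final statement only the existence of some annihilating $N$ is needed, so the precise value obtained above is immaterial.
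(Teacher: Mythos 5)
Your argument is correct, but note that the paper itself offers no proof of this proposition: it is quoted verbatim from Matsuo--Nagatomo \cite{MN97} (their Proposition 2.1.5, i.e.\ Dong's lemma) and used as a black box in Sect.~4. So there is nothing in the paper to compare against line by line; what you have written is essentially the standard proof of Dong's lemma, and it holds together. The key steps all check out: the Leibniz split into the $B$--$C$ and $A$--$C$ groups is legitimate because ${\rm Res}_x$ is coefficient extraction and commutes with taking the commutator with $C(z)$; the $B$--$C$ group dies under $(w-z)^t$ alone; and in the $A$--$C$ group the substitution $w-z=(x-z)-(x-w)$ together with $(x-z)^s[A(x),C(z)]=0$ forces every surviving monomial to carry $(x-w)^{m+k}$ with $m+k\ge r\ge 0$ (both when $L=r+s-m-1\ge 0$ and when $L=0$, where then $m\ge r+s$), which is exactly what is needed both to merge the two boundary expansions $(x-w)^m\big|_{|x|>|w|}$ and $(x-w)^m\big|_{|x|<|w|}$ into one honest polynomial and to kill the $[[A,B],C]$ term after Jacobi, while the reserved $(w-z)^t$ kills the residual $[A,[B,C]]$ term. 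Your book-keeping $N=t+\max(0,r+s-m-1)$ is a valid (if not optimal) order of locality; compared with the usual textbook version (e.g.\ Kac's $(w-z)^{4r}$ with a common order $r$), your accounting is a little more explicit about which hypothesis pays for which power, which is a reasonable trade-off. The only cosmetic caveat is that ``bring the commutator inside ${\rm Res}_x$ because each term is a field'' is better justified simply by the fact that ${\rm Res}_x$ extracts the coefficient of $x^{-1}$ and hence commutes with any coefficientwise-linear operation; the field condition is what makes the resulting coefficients well defined as operators, not what licenses the interchange.
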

This and the following results will be used in Sect.~\ref{sec:4}.
\begin{proposition}[Ref.~\citen{MN97}, Proposition 2.1.6.]\label{prop:mar17_4}
Let $A^{i}(z) = \sum_{n \in \mathbb{Z}} A^{i}_n z^{-n-1}$ be mutually local fields 
for $1 \leq i  \leq \ell$.
Then for any $u \in M$ the condition
\begin{equation*}
A^{1}_{p_1} \cdots A^{\ell}_{p_\ell} u = 0,
\end{equation*}
is satisfied for any sufficiently large $p_1 + \cdots + p_\ell \in \Z$.
\end{proposition}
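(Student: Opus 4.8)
The plan is to prove, by induction on $\ell$, the equivalent assertion that the formal series $A^{1}(z_1)\cdots A^{\ell}(z_\ell)u$ has \emph{total degree bounded below}: there is an integer $N$ (depending on $u$ and the fields) such that the coefficient of $z_1^{a_1}\cdots z_\ell^{a_\ell}$ vanishes whenever $a_1+\cdots+a_\ell<N$. Since this coefficient is $A^{1}_{p_1}\cdots A^{\ell}_{p_\ell}u$ with $a_i=-p_i-1$, and $a_1+\cdots+a_\ell=-(p_1+\cdots+p_\ell)-\ell$, this is precisely the claim that $A^{1}_{p_1}\cdots A^{\ell}_{p_\ell}u=0$ for $p_1+\cdots+p_\ell$ sufficiently large. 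For $\ell=1$ it is immediate from the definition of a field, since $A^{1}(z_1)u$ has only finitely many negative powers of $z_1$.

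For the inductive step I would first expand the leftmost pair using the locality of $A^{1}$ and $A^{2}$. By Proposition \ref{th:mar15_1}, writing $m$ for their order of locality,
\begin{equation*}
A^{1}(z_1)A^{2}(z_2)=\sum_{i=0}^{m-1}A^{1}(z_2)_{(i)}A^{2}(z_2)\,(z_1-z_2)^{-i-1}\big|_{|z_1|>|z_2|}+A^{1}(z_1)_{-}A^{2}(z_2)+A^{2}(z_2)A^{1}(z_1)_{+}.
\end{equation*}
Applying this to $w:=A^{3}(z_3)\cdots A^{\ell}(z_\ell)u$ splits $A^{1}(z_1)\cdots A^{\ell}(z_\ell)u$ into three groups. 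The two structural facts I would exploit are: (i) each kernel $(z_1-z_2)^{-i-1}\big|_{|z_1|>|z_2|}=\sum_{n\ge i}\binom{n}{i}z_1^{-n-1}z_2^{n-i}$ is \emph{homogeneous of total degree} $-i-1$, so multiplication by it only shifts total degree by a constant; and (ii) by Dong's lemma (Proposition \ref{prop:mar17_3}) the residue product $A^{1}(z_2)_{(i)}A^{2}(z_2)$ is again local with each of $A^{3},\dots,A^{\ell}$, so after the expansion the fields still form a mutually local family of size $\ell-1$. Hence the contraction group $\sum_i A^{1}(z_2)_{(i)}A^{2}(z_2)A^{3}(z_3)\cdots A^{\ell}(z_\ell)u\,(z_1-z_2)^{-i-1}\big|_{|z_1|>|z_2|}$ is bounded below by the induction hypothesis together with (i) and (ii), and the group $A^{1}(z_1)_{-}A^{2}(z_2)\cdots A^{\ell}(z_\ell)u$ is bounded below because $A^{1}(z_1)_{-}$ contributes only nonnegative powers of $z_1$ while $A^{2}(z_2)\cdots A^{\ell}(z_\ell)u$ is bounded below by the induction hypothesis.

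The remaining group $A^{2}(z_2)A^{1}(z_1)_{+}A^{3}(z_3)\cdots A^{\ell}(z_\ell)u$, in which the positive part $A^{1}(z_1)_{+}$ sits in the middle, is the genuinely delicate one and is where I expect the main obstacle to lie: a direct estimate fails because $A^{1}(z_1)_{+}w$ need not be a finite sum once $w$ carries infinitely many coefficients, so no uniform bound is visible. My plan to overcome this is to push $A^{1}(z_1)_{+}$ all the way to the right past $A^{3},\dots,A^{\ell}$ using the identity $[A^{1}(z_1)_{+},A^{k}(z_k)]=\sum_{i=0}^{m_{1k}-1}A^{1}(z_k)_{(i)}A^{k}(z_k)\,(z_1-z_k)^{-i-1}\big|_{|z_1|>|z_k|}$ from the derivation preceding Proposition \ref{th:july1_1}, which is a \emph{finite} sum precisely by locality. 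This yields one term in which $A^{1}(z_1)_{+}$ now acts directly on $u$, giving the \emph{finite} sum $A^{1}(z_1)_{+}u=\sum_{0\le n<n_0}A^{1}_{n}u\,z_1^{-n-1}$, so that $A^{2}(z_2)\cdots A^{\ell}(z_\ell)A^{1}(z_1)_{+}u$ is bounded below by applying the induction hypothesis to each of the finitely many vectors $A^{1}_{n}u$; together with finitely many commutator terms in which $A^{k}$ is replaced by the field $A^{1}(z_k)_{(i)}A^{k}(z_k)$, each again involving only $\ell-1$ mutually local fields (Dong's lemma) multiplied by a homogeneous kernel, hence bounded below by the induction hypothesis. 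Collecting the three groups shows that $A^{1}(z_1)\cdots A^{\ell}(z_\ell)u$ is bounded below, which closes the induction.
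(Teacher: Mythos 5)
The paper does not actually prove this proposition; it is quoted from \cite{MN97} (Proposition 2.1.6) and used as an imported fact, so there is no in-paper argument to compare yours against. Your proof is correct and is self-contained modulo results the paper does state. The reformulation as a lower bound on the total degree of $A^{1}(z_1)\cdots A^{\ell}(z_\ell)u$ is the right one, and the induction is sound: by Proposition \ref{th:mar15_1} the singular part of $A^{1}(z_1)A^{2}(z_2)$ truncates at $i=m-1$, the kernels $(z_1-z_2)^{-i-1}\big|_{|z_1|>|z_2|}$ are homogeneous of total degree $-i-1$, Dong's lemma (Proposition \ref{prop:mar17_3}) keeps the reduced family of $\ell-1$ fields mutually local, and $A^{1}(z_1)_{-}$ contributes only nonnegative powers of $z_1$. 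You also correctly isolate the one genuinely dangerous term, $A^{2}(z_2)A^{1}(z_1)_{+}A^{3}(z_3)\cdots A^{\ell}(z_\ell)u$, where no uniform bound on $A^{1}(z_1)_{+}w$ exists for a series $w$, and your fix --- commuting $A^{1}(z_1)_{+}$ rightward via the finite sum $[A^{1}(z_1)_{+},A^{k}(z_k)]=\sum_{i<m_{1k}}(A^{1}_{(i)}A^{k})(z_k)\,(z_1-z_k)^{-i-1}\big|_{|z_1|>|z_k|}$ until it acts on $u$, where $A^{1}(z_1)_{+}u$ is a finite sum --- is exactly what is needed, since every resulting term is again $\ell-1$ mutually local fields applied to a vector, possibly times a homogeneous kernel. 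This is essentially the generating-function form of the mode-by-mode induction used in \cite{MN97}. Two points should be made explicit in a written version, though neither is a gap: the induction hypothesis must be quantified over all mutually local families of $\ell-1$ fields and all vectors of $M$ (you apply it both to the new fields $A^{1}_{(i)}A^{k}$ and to the new vectors $A^{1}_{n}u$), and each formal-series product you form is coefficientwise a finite sum, which is what legitimizes the rearrangements.
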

\subsection{Differentiation and Taylor expansion}\label{sec:2_4}
We now introduce some formulas
on the differentiation of fields.
For the series $A(z)$ defined as \eqref{eq:july6_1} we define
its differentiation $\partial_z A(z)$ as
\begin{equation}
\partial_z A(z) = \sum_{n \in \mathbb{Z}} (-n-1) A_n z^{-n-2}.\label{eq:july6_2}
\end{equation}
If $A(z)$ is a field, so is $\partial_z A(z)$.
Also, the differentiation of a field of two variables is given in a similar way.
From \eqref{eq:oct20_1} and \eqref{eq:july6_2} one easily proves the following formula
\begin{equation}
(\partial_z A(z))_{(i)} B(z) = -i A(z)_{(i-1)}B(z).\label{eq:july29_2}
\end{equation}
By repeated use of \eqref{eq:july29_2} we obtain
\begin{equation}
\normOrd{\partial^{(j)}A(z) B(z)} = A(z)_{(-j-1)}B(z) ,\label{eq:mar16_1}
\end{equation}
for any non-negative integer $j$.
Let $I(z)$ be the identity field on $M$: That is an operator whose
only non-zero term is the constant term which is the identity operator on $M$.
Then by \eqref{eq:mar16_1} we have:
\begin{proposition}[Ref.~\citen{MN97}, Proposition 1.4.4.]\label{prop:mar16_2}
\begin{equation}
A(z)_{(m)} I(z) =
\begin{cases}
0, & (m \geq 0), \\
\partial^{(-m-1)} A(z), & (m \leq -1). 
\end{cases}
\end{equation}
\end{proposition}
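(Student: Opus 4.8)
The plan is to treat the two cases $m\ge 0$ and $m\le -1$ separately, reducing each to identities already established in the excerpt rather than computing from scratch. The single observation that drives everything is that the identity field is constant in $z$: since $I(z)=\sum_{n}I_n z^{-n-1}$ has $I_{-1}=\mathrm{id}_M$ as its only nonzero coefficient, we have $I(z)=\mathrm{id}_M$, so that $I(z)_{+}=0$, $I(z)_{-}=\mathrm{id}_M$, and $I(z)$ commutes with every $A_n$ and hence with $A(y)$.

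For $m\ge 0$ I would invoke the formula derived at the opening of Section~\ref{sec:2_2}, namely that for nonnegative $m$ the residue product reduces to a commutator, $A(z)_{(m)}B(z)={\rm Res}_{y=0}[A(y),B(z)](y-z)^m$. This holds precisely because $(y-z)^m\big|_{|y|>|z|}=(y-z)^m\big|_{|y|<|z|}$ when $m\ge 0$, so the two terms in \eqref{eq:oct20_1} combine into a commutator. Substituting $B(z)=I(z)=\mathrm{id}_M$ and using $[A(y),\mathrm{id}_M]=0$ gives $A(z)_{(m)}I(z)={\rm Res}_{y=0}[A(y),\mathrm{id}_M](y-z)^m=0$, which is the first case.

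For $m\le -1$ I would set $j=-m-1$, a nonnegative integer, and apply \eqref{eq:mar16_1} with $B(z)=I(z)$ to obtain $A(z)_{(m)}I(z)=A(z)_{(-j-1)}I(z)=\normOrd{\partial^{(j)}A(z)\,I(z)}$. It then remains to evaluate a normally ordered product against the identity field. Writing $C(z)=\partial^{(j)}A(z)$ and using Definition~\eqref{eq:july3_3}, we get $\normOrd{C(z)I(z)}=C(z)_{-}I(z)+I(z)C(z)_{+}=C(z)_{-}\,\mathrm{id}_M+\mathrm{id}_M\,C(z)_{+}=C(z)$, so normal-ordering against $I(z)$ is the identity operation. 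Hence $A(z)_{(m)}I(z)=\partial^{(j)}A(z)=\partial^{(-m-1)}A(z)$, as claimed.

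There is no genuine analytic or algebraic obstacle here; the only points requiring care are the bookkeeping that $m\le -1$ matches the range $m=-j-1$ with $j\ge0$ for which \eqref{eq:mar16_1} is valid, and the elementary verification that $\normOrd{C(z)I(z)}=C(z)$, which rests entirely on $I(z)$ being the constant operator $\mathrm{id}_M$. I expect the ``hardest'' step to be simply stating the constancy of $I(z)$ cleanly so that both the vanishing of the commutator and the triviality of the normal ordering follow transparently.
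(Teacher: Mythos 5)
Your proof is correct and follows essentially the same route as the paper, which simply asserts the proposition ``by \eqref{eq:mar16_1}'': you apply \eqref{eq:mar16_1} with $B(z)=I(z)$ for $m\le -1$ and the commutator form of the residue product for $m\ge 0$, both hinging on $I(z)=\mathrm{id}_M$. The only difference is that you spell out the details (in particular $\normOrd{C(z)I(z)}=C(z)$ and the vanishing commutator) that the paper leaves implicit.
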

This relation will be used to derive the skew symmetry from the Borcherds identity. Also, we have the following Taylor's formula
which will be used in the proof of Theorem \ref{th:2}, 
as well as in some calculations in Appendix \ref{sec:5}.
\begin{proposition}[Ref.~\citen{MN97}, Proposition 2.2.4.]\label{prop:jul30_3}
For any field 
$A(y,z)$ and any positive integer
$N$, there exists a unique field
$R_N(y,z)$ such that the following relation is satisfied:
\begin{equation*}
A(y,z) = \sum_{i=0}^{N-1} \partial_y^{(i)} A(y,z)\vert_{y=z} (y-z)^i + R_N(y,z) (y-z)^N.
\end{equation*}
In particular for any field
$A(z)$ and any positive integer
$N$, there exists a unique field
$R_N(y,z)$ such that the following relation is satisfied:
\begin{displaymath}
A(y) = \sum_{i=0}^{N-1} \partial_z^{(i)} A(z) (y-z)^i + R_N(y,z) (y-z)^N.
\end{displaymath}
\end{proposition}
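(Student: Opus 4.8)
The plan is to prove the two-variable statement and then read off the one-variable statement as the special case of a field that does not depend on $z$. Indeed, a one-variable field $A(y)=\sum_n A_n y^{-n-1}$ coincides with the two-variable series $\tilde A(y,z)$ defined by $\tilde A_{p,q}=A_p$ when $q=-1$ and $\tilde A_{p,q}=0$ otherwise; one checks from the field condition on $A$ that $\tilde A$ is a field of two variables, and clearly $\partial_y^{(i)}\tilde A(y,z)\vert_{y=z}=\partial_z^{(i)}A(z)$. So it suffices to treat a field $A(y,z)$ of two variables. Uniqueness is immediate from Lemma \ref{lem:mar17_5}: if two fields $R_N,R_N'$ both satisfy the asserted identity, subtracting gives $(R_N-R_N')(y-z)^N=0$, and since $R_N-R_N'$ is again a field it must vanish, so $R_N=R_N'$.

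For existence I would induct on $N$. The base case $N=1$ asserts that $A(y,z)-A(z,z)$ is divisible by $(y-z)$ with a field as quotient. I would construct the quotient explicitly: writing $A(y,z)-A(z,z)=\sum_{p,q}A_{p,q}\bigl(y^{-p-1}-z^{-p-1}\bigr)z^{-q-1}$, each factor $\bigl(y^{-p-1}-z^{-p-1}\bigr)/(y-z)=:g_p(y,z)$ is a genuine \emph{finite} Laurent polynomial, because the difference of two monomials is always divisible by $y-z$ over $\mathbb{C}[y,y^{-1},z,z^{-1}]$. Setting $R_1(y,z)=\sum_{p,q}A_{p,q}\,g_p(y,z)\,z^{-q-1}$ gives $A(y,z)-A(z,z)=R_1(y,z)(y-z)$ formally.

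The main obstacle is to verify that this $R_1$ is a bona fide field. A priori the coefficient of a fixed monomial $y^\alpha z^\beta$ in $R_1$ is an infinite sum of the $A_{p,q}$, but on any $v\in M$ it collapses to a finite sum, since $A_{p,q}v=0$ unless $p,q$ are bounded above; this is exactly the mechanism by which $A(z,z)$ was already seen to make sense as a linear transformation on $M$. Hence the coefficients of $R_1$ are well-defined operators. The field property of $R_1$ then follows by bookkeeping the degree ranges of the explicit $g_p$: for $p\le -1$ the $y$- and $z$-degrees of $g_p$ lie in $[0,-p-2]$, while for $p\ge 0$ they lie in $[-(p+1),-1]$. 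Combining these ranges with the upper bounds on $p$ and $q$ forced by $A_{p,q}v\ne 0$ shows that the coefficient of $y^\alpha z^\beta$ in $R_1 v$ vanishes once $\alpha$ or $\beta$ is sufficiently negative, which is precisely the finiteness required of a field.

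For the inductive step, assume the identity holds for $N$ with $R_N$ a field. Applying the already-proved $N=1$ case to the field $R_N$ gives $R_N(y,z)=R_N(z,z)+R_{N+1}(y,z)(y-z)$ with $R_{N+1}$ a field, so it only remains to identify $R_N(z,z)=\partial_y^{(N)}A(y,z)\vert_{y=z}$. I would obtain this by applying $\partial_y^{(N)}$ to the order-$N$ identity and setting $y=z$: every term $\partial_y^{(i)}A\vert_{y=z}(y-z)^i$ with $i<N$ is annihilated because $\partial_y^{(N)}(y-z)^i=\binom{i}{N}(y-z)^{i-N}=0$, while the Leibniz expansion of $\partial_y^{(N)}\bigl(R_N(y-z)^N\bigr)$ leaves only $R_N(z,z)$ at $y=z$. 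Substituting $R_N(z,z)=\partial_y^{(N)}A\vert_{y=z}$ back into the order-$N$ identity then yields the identity with $N$ replaced by $N+1$, completing the induction.
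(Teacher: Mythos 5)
Your argument is correct, but it does considerably more than the paper does: the paper does not prove the two-variable statement at all (it is quoted from Matsuo and Nagatomo \cite{MN97}, Proposition 2.2.4), and the only thing actually proved in the text is the reduction of the one-variable version to the two-variable one via the embedding $A_{p,-1}=A_p$, $A_{p,q(\ne -1)}=0$ --- exactly the reduction you open with. Everything after that in your write-up is a self-contained proof of the cited result, and it holds up: uniqueness follows from Lemma \ref{lem:mar17_5} applied to $(R_N-R_N')v$ for each $v\in M$, whose $y$-exponents are bounded below because $R_N-R_N'$ is a field; existence for $N=1$ follows from the explicit division $g_p(y,z)=(y^{-p-1}-z^{-p-1})/(y-z)$ with the degree bookkeeping you indicate; and the inductive step correctly identifies $R_N(z,z)$ with $\partial_y^{(N)}A(y,z)\vert_{y=z}$ by applying $\partial_y^{(N)}$ and restricting to the diagonal, which is legitimate because every series involved is a field of two variables. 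What your route buys is independence from the reference; what the paper's route buys is brevity. The one step worth writing out in full is the well-definedness of the coefficients of $R_1$: for a fixed monomial $y^\alpha z^\beta$ the contributing pairs with $p\le -1$ satisfy $q=-p-3-\alpha-\beta$, so $q\to+\infty$ as $p\to-\infty$ and the sum applied to any $v$ is finite by the two-sided cutoff in the definition of a field of two variables --- the same mechanism the paper invokes to make sense of $A(z,z)$, as you note.
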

Indeed the latter statement follows from the former, since
by setting $A_{p,-1}=A_p, A_{p,q(\ne -1)}=0$ in \eqref{eq:sep4_2}
any field of one variable can be obtained as a special case of
a field of two variables.

\subsection{Borcherds identity and skew symmetry}\label{sec:2_5}
Now we introduce a remarkable identity that is relevant for our discussion
on the Wick theorems.
As one of the main results of Matsuo and Nagatomo \cite{MN97}, 
they derived an identity that comprises
three infinite sums of nested commutators
of arbitrary three fields with some binomial expansion factors,
which is a consequence of the usual Jacobi identity.
By taking the residue on both sides of this identity they 
proved the following identity satisfied by arbitrary three fields
with respect to the residue products.
\begin{proposition}[Ref.~\citen{MN97}, Corollaries 3.2.2. and 3.4.2.]\label{th:Borcherds}
Let $A(w), B(w)$ and $C(w)$ be fields on a vector space.
Then, for any $p,r \in \mathbb{N}$ and any $q \in \Z$,
\begin{eqnarray}
&&\sum_{i=0}^\infty {p \choose i} (A(w)_{(r+i)}B(w))_{(p+q-i)}C(w)\nonumber\\
&&=\sum_{i=0}^\infty (-1)^i  {r \choose i}
\left( A(w)_{(p+r-i)}(B(w)_{(q+i)} C(w)) 
-(-1)^r
B(w)_{(q+r-i)} (A(w)_{(p+i)} C(w)) \right).\nonumber\\
&&\label{eq:borcherds}
\end{eqnarray}
Moreover, if $A(w), B(w), C(w)$ are mutually local, then this relation is satisfied
for any $p,q,r \in \Z$.  
\end{proposition}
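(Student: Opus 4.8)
The plan is to lift the three residue products appearing in \eqref{eq:borcherds} to iterated formal residues in three independent variables and to reduce the whole identity to the elementary associativity and Jacobi identity for the operator product. I would attach independent formal variables $x,y,z$ to $A,B,C$ respectively. Applying the definition \eqref{eq:oct20_1} of the residue product twice, each nested product in \eqref{eq:borcherds} becomes a double residue $\mathrm{Res}_y\,\mathrm{Res}_x$ of an ordering of $A(x)B(y)C(z)$ multiplied by a product of powers of the differences $(x-y)$, $(y-z)$, $(x-z)$, each power expanded in a prescribed region. The decisive observation is that after summation over $i$ the three binomial sums collapse by the binomial theorem to one and the same rational factor: on the left $\sum_i\binom{p}{i}(x-y)^{r+i}(y-z)^{p+q-i}=(x-y)^r(y-z)^q(x-z)^p$, while for the two right-hand terms $\sum_i(-1)^i\binom{r}{i}(y-z)^{q+i}(x-z)^{p+r-i}=(x-y)^r(y-z)^q(x-z)^p$ and $\sum_i(-1)^i\binom{r}{i}(x-z)^{p+i}(y-z)^{q+r-i}=(-1)^r(x-y)^r(y-z)^q(x-z)^p$. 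Thus all three terms carry the common factor $(x-y)^r(y-z)^q(x-z)^p$ and differ only in the ordering of the operators and in the region in which this factor is expanded.

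Once this normalization is in place, I would put everything in commutator form: for a nonnegative exponent on $(x-y)$ the residue product reproduces $\mathrm{Res}_x[A(x),B(y)](x-y)^m$, exactly the computation carried out just before Proposition \ref{th:july1_1}. The left-hand side of \eqref{eq:borcherds} is then the double residue of $[[A(x),B(y)],C(z)]$ against the common factor, the first right-hand term is the double residue of $[A(x),[B(y),C(z)]]$, and the second (carrying its sign $-(-1)^r$ together with the $(-1)^r$ produced by the binomial collapse) is the double residue of $-[B(y),[A(x),C(z)]]$. The Jacobi identity $[[A,B],C]=[A,[B,C]]-[B,[A,C]]$ for the associative product then yields \eqref{eq:borcherds} term by term after taking residues. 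The restriction $p,r\in\mathbb{N}$ enters precisely here: for $p,r\geq 0$ the factors $(x-z)^p$ and $(x-y)^r$ are genuine polynomials with a single region-independent expansion, which is what legitimizes both the binomial rearrangement above and the interchange of the finite residue operations, the field property guaranteeing that every sum in sight has finite support.

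For the local case I would drop the region restrictions. When $A,B,C$ are mutually local of some order $N$, the one-sided expansions $A(x)B(y)\big|_{|x|>|y|}$ and $A(x)B(y)\big|_{|x|<|y|}$ agree after multiplication by $(x-y)^N$, and likewise for the other pairs; by Proposition \ref{prop:mar17_4} the triple products retain the finiteness needed for the residues to remain well defined. This lets me replace each negative power $(x-y)^r$ or $(x-z)^p$ (with $r,p<0$) inside the residues by either of its region expansions without changing the outcome, so the binomial collapse and the Jacobi cancellation go through verbatim for all $p,q,r\in\mathbb{Z}$. The rigorous passage from ``equal after multiplication by $(x-y)^N$'' to ``equal as formal series'' is supplied by Lemma \ref{lem:mar17_5}, applied to the finitely supported series obtained after the residues are taken, with Proposition \ref{th:mar15_1} controlling the two one-sided expansions.

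The step I expect to be the main obstacle is this extension to negative $p$ and $r$ under locality: tracking which region each power of a difference must be expanded in, and then showing that locality forces the region ambiguities to cancel. Keeping the three expansion domains mutually consistent across the three channels, and invoking Lemma \ref{lem:mar17_5} at exactly the right finitely supported stage, is the delicate part; by contrast the nonlocal identity for $p,r\geq 0$ is essentially the ordinary Jacobi identity dressed with the binomial bookkeeping above.
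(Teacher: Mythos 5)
The paper does not prove this proposition itself: it is quoted from \cite{MN97} (Corollaries 3.2.2 and 3.4.2), with the preceding paragraph describing that proof in one sentence as an identity of nested commutators of three fields, derived from the ordinary Jacobi identity, to which one applies residues. Your proposal reconstructs essentially that argument --- the binomial collapse of all three sums onto the common factor $(x-y)^r(y-z)^q(x-z)^p$, the reduction to $[[A,B],C]=[A,[B,C]]-[B,[A,C]]$ under iterated residues with the expansion region of the $(y-z)$ power determined in every term by whether $B$ stands left or right of $C$, and locality together with Lemma \ref{lem:mar17_5} to lift the restriction $p,r\in\mathbb{N}$ --- so it follows the same route the paper points to, and your binomial identities and sign bookkeeping are correct.
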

This relation \eqref{eq:borcherds} is called the Borcherds identity \cite{Kac96},
since it is equivalent to one of the axioms for a vertex algebra 
introduced by Borcherds \cite{B86}.

We also introduce an identity that will be used in Appendix~\ref{sec:5}.
As a special case of the Borcherds identity, one can derive the following relation
which is known as the skew symmetry.
\begin{proposition}[Ref.~\citen{MN97}, Proposition 3.5.2.]\label{pr:skewsym}
\begin{equation}\label{eq:skewsym}
B(z)_{(m)}A(z) = \sum_{i=0}^\infty (-1)^{m+i+1} \partial^{(i)}
(A(z)_{(m+i)}B(z) ).
\end{equation}
\end{proposition}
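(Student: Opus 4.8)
The plan is to obtain \eqref{eq:skewsym} as a specialization of the Borcherds identity \eqref{eq:borcherds} in which the third field is taken to be the identity field $I(w)$, using Proposition \ref{prop:mar16_2} to convert residue products against $I(w)$ into derivatives. The guiding observation is that the right-hand side of \eqref{eq:skewsym} consists of the derivatives $\partial^{(i)}(A(w)_{(m+i)}B(w))$, which is exactly the shape produced by $(A(w)_{(r+i)}B(w))_{(\cdot)}I(w)$ through Proposition \ref{prop:mar16_2}, whereas the target term $B(w)_{(m)}A(w)$ is of the type produced by the $B(w)_{(\cdot)}(A(w)_{(\cdot)}I(w))$ summand on the right of \eqref{eq:borcherds}. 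So the two sides of the desired identity should line up, respectively, with the two sides of \eqref{eq:borcherds} after the substitution $C(w)=I(w)$.

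Concretely, I would set $C(w)=I(w)$, $p=-1$, $q=0$, and $r=m$ in \eqref{eq:borcherds}. Since $p=-1$ is negative, this requires the all-integers form of the Borcherds identity, which is licensed by mutual locality; here it costs nothing beyond the mutual locality of $A$ and $B$, because $I(w)$ commutes with every field and hence the triple $A,B,I$ is automatically local once $A$ and $B$ are. With these values the left-hand side of \eqref{eq:borcherds} becomes $\sum_{i\geq 0}\binom{-1}{i}(A(w)_{(m+i)}B(w))_{(-1-i)}I(w)$. Using $\binom{-1}{i}=(-1)^i$ together with Proposition \ref{prop:mar16_2} (valid here because $-1-i\leq -1$ for all $i\geq 0$), each term collapses to $(-1)^i\,\partial^{(i)}(A(w)_{(m+i)}B(w))$, reproducing precisely the series on the right of \eqref{eq:skewsym}.

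For the right-hand side of \eqref{eq:borcherds} I would isolate two simplifications. In the first family of terms the inner factor is $B(w)_{(q+i)}I(w)=B(w)_{(i)}I(w)$, which vanishes for every $i\geq 0$ by Proposition \ref{prop:mar16_2}, so this whole family drops out. In the second family the inner factor is $A(w)_{(p+i)}I(w)=A(w)_{(-1+i)}I(w)$, which by Proposition \ref{prop:mar16_2} is nonzero only at $i=0$, where $A(w)_{(-1)}I(w)=A(w)$. Hence only the $i=0$ term of the second family survives, contributing $-(-1)^m\,B(w)_{(m)}A(w)$. Equating the two sides gives $\sum_{i\geq 0}(-1)^i\,\partial^{(i)}(A(w)_{(m+i)}B(w))=-(-1)^m\,B(w)_{(m)}A(w)$, and solving for $B(w)_{(m)}A(w)$ (using $(-1)^{-m}=(-1)^m$ and $-(-1)^{m+i}=(-1)^{m+i+1}$) yields exactly \eqref{eq:skewsym}.

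The main difficulty is bookkeeping rather than conceptual: one must track the binomial coefficient $\binom{-1}{i}$, the index shift $-(p+q-i)-1$ entering Proposition \ref{prop:mar16_2}, and the global sign $(-1)^r$ carefully enough that the surviving term emerges with the correct factor $(-1)^{m+1}$. The one genuinely substantive point to state explicitly is the use of a negative value $p=-1$, which forces the all-integers Borcherds identity; I would justify this at the outset by remarking that the identity field is local with every field, so the required locality hypothesis reduces to the mutual locality of $A$ and $B$ alone.
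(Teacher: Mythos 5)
Your proposal is correct and follows essentially the same route as the paper: specialize the Borcherds identity \eqref{eq:borcherds} at $p=-1$, $q=0$, $r=m$ with an identity field inserted, and collapse the residue products against $I$ via Proposition \ref{prop:mar16_2}. The one discrepancy is which slot receives the identity: the paper's remark says ``$B(z)=I(z)$'', i.e.\ the \emph{middle} field, but with that substitution the left-hand side of \eqref{eq:borcherds} and the surviving terms on the right both degenerate (for $m\geq 0$ one gets $0=0$, since $\binom{m}{m+1}=0$), so no skew symmetry emerges; your choice $C(w)=I(w)$ is the one that actually produces \eqref{eq:skewsym}, and the paper's phrasing appears to be a slip for that. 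Your explicit remark that $p=-1$ forces the all-integers form of the Borcherds identity, hence mutual locality of $A$ and $B$ (with $I$ automatically local to everything), is a hypothesis the paper leaves tacit and is worth keeping.
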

Indeed this relation is obtained
by setting $p=-1, q=0, r=m, B(z) = I(z)$ in \eqref{eq:borcherds}
and by using Proposition \ref{prop:mar16_2}.

\section{Wick Theorems as Special Cases of the Borcherds Identity}\label{sec:3}
\subsection{Generalized Wick theorem for $\contraction{}{A}{(z)(}{BC}A(z)(BC)(w)$}\label{sec:3_1}
In this section we show that the Wick theorems \eqref{eq:july3_1} and \eqref{eq:main} are equivalent to special cases of the Borcherds identity.
For this purpose we first prepare some notations.
By generalizing our notation
in Definition \ref{def:OPE}
we define 
the OPEs of
an operator and a contraction as
\begin{eqnarray}
R(\contraction{}{A}{(z)}{B}A(z)B(x)C(w)) &=&  \contraction{(}{A}{(z)}{B}
\bcontraction{(A(z}{)}{B(x))}{C}
(A(z)B(x))C(w) +\normOrdx{\,\contraction{}{A}{(z)}{B} A(z)B(x)C(w)},\label{eq:jan27_2a}\\
R(B(x) \contraction{}{A}{(z)}{C}A(z) C(w) ) &=& \bcontraction{}{B}{(x) (A(z}{)}
\contraction{B(x) (}{A}{(z)}{C}
B(x) (A(z) C(w)) + \normOrdx{\,B(x) \contraction{}{A}{(z)}{C} A(z)C(w)} ,
\label{eq:jan27_2}
\end{eqnarray}
where
\begin{equation}
\contraction{(}{A}{(z)}{B}
\bcontraction{(A(z}{)}{B(x))}{C}
(A(z)B(x))C(w)=
\sum_{i=0}^{\infty} \frac{
\contraction{(A(x)}{_(i)}{B(x))}{C}(A(x)_{(i)}B(x))C(w)}{(z-x)^{i+1}}=
\sum_{i=0}^{\infty} \sum_{j=0}^{\infty}\frac{(A(w)_{(i)}B(w))_{(j)}C(w)}{(z-x)^{i+1}(x-w)^{j+1}},
\end{equation}
\begin{equation}
\normOrdx{\,\contraction{}{A}{(z)}{B} A(z)B(x)C(w)}=
\sum_{i=0}^{\infty} \frac{\normOrd{ \,(A(x)_{(i)}B(x)) C(w)}}{(z-x)^{i+1}},
\end{equation}
\begin{equation}\label{eq:mar24_5}
\bcontraction{}{B}{(x) (A(z}{)}
\contraction{B(x) (}{A}{(z)}{C}
B(x) (A(z) C(w))=
\sum_{i=0}^{\infty} \frac{\contraction{}{B}{(x)(A(w)_}{(i}B(x) (A(w)_{(i)}C(w))}{(z-w)^{i+1}}
=
\sum_{i=0}^{\infty} 
\sum_{j=0}^{\infty} 
\frac{B(w)_{(j)} (A(w)_{(i)}C(w))}{(x-w)^{j+1} (z-w)^{i+1}},
\end{equation}
\begin{equation}\label{eq:jan27_1}
\normOrdx{\,B(x) \contraction{}{A}{(z)}{C} A(z)C(w)}=
\sum_{i=0}^{\infty} \frac{\normOrd{ \,B(x) (A(w)_{(i)}C(w))}}{(z-w)^{i+1}}.
\end{equation}

As we have mentioned before, 
the operators are defined as formal series and we are not allowed to
substitute any real or complex numbers into the symbols $w,x$ and $z$.
Admitting that, from now on we will try to write formulas including 
contour integrals of
the operators with respect to these variables. 
Actually, they are simply interpreted as formal symbols
for an algebraic manipulation that uses the following formula:
\begin{equation}
\frac{1}{2 \pi \sqrt{-1}}
\oint_{C_w} \frac{d x}{(z-x)^m (x-w)^n} 
= {m+n-2 \choose n-1}
\frac{1}{(z-w)^{m+n-1}}. \label{eq:sep4_4}
\end{equation}

Now we present the well-known generalized Wick theorem in our formulation:
\begin{theorem}\label{th:1}
\red{Let $A(w), B(w)$ and $C(w)$ be fields on a vector space.
Then the following relation is satisfied.}
\begin{equation}
\contraction{}{A}{(z)(}{BC}
A(z)(BC)(w) 
= \frac{1}{2 \pi \sqrt{-1}}
\oint_{C_w} \frac{d x}{x-w} \left\{R(\contraction{}{A}{(z)}{B}
A(z)B(x) C(w))+ R(B(x) \contraction{}{A}{(z)}{C}A(z) C(w) )\right\}.
\label{eq:oct21_1}
\end{equation}
Here $R$ represents the OPE given by \eqref{eq:jan27_2a}
and \eqref{eq:jan27_2}.
\end{theorem}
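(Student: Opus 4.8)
The plan is to expand both sides of \eqref{eq:oct21_1} as formal series in $(z-w)^{-1}$ and to match the coefficient of each power $(z-w)^{-n-1}$, $n\geq 0$; throughout I suppress the common argument $w$ and write $A_{(i)}$ for $A(w)_{(i)}$. On the left, Definition \ref{def:contraction} together with $(BC)(w)=B_{(-1)}C$ (which is the case $m=-1$ of the residue product) shows that the coefficient of $(z-w)^{-n-1}$ is $A_{(n)}(B_{(-1)}C)$. The whole task is therefore to verify that the contour integral on the right produces exactly this element for every $n$.

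First I would substitute the explicit expansions written just after \eqref{eq:jan27_2a} and \eqref{eq:jan27_2} and integrate $\frac{1}{2\pi\sqrt{-1}}\oint_{C_w}\frac{dx}{x-w}$ termwise. The two double-contraction pieces are handled directly by \eqref{eq:sep4_4}: the piece from \eqref{eq:jan27_2a} contributes $\sum_{i,j\geq0}\binom{i+j+1}{j+1}(A_{(i)}B)_{(j)}C\,(z-w)^{-i-j-2}$ (with $m=i+1$, $n=j+2$), while the piece from \eqref{eq:jan27_2} carries no factor $(z-x)$ and hence vanishes identically, since $\oint_{C_w}(x-w)^{-n}\,dx=0$ for $n\geq2$. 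For the two normally ordered pieces I would invoke Taylor's formula (Proposition \ref{prop:jul30_3}) to expand $\normOrdx{(A(x)_{(i)}B(x))C(w)}$ and $\normOrdx{B(x)(A(w)_{(i)}C(w))}$ about $x=w$; only the zeroth Taylor coefficient survives the residue $\oint_{C_w}dx/(x-w)$, contributing $(A_{(n)}B)_{(-1)}C$ and $B_{(-1)}(A_{(n)}C)$ respectively to the coefficient of $(z-w)^{-n-1}$.

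Collecting the survivors, the right-hand coefficient of $(z-w)^{-n-1}$ is $(A_{(n)}B)_{(-1)}C+\sum_{j=0}^{n-1}\binom{n}{j+1}(A_{(n-1-j)}B)_{(j)}C+B_{(-1)}(A_{(n)}C)$. After the reindexing $k=j+1$ the first two summands merge into the single sum $\sum_{i=0}^{n}\binom{n}{i}(A_{(i)}B)_{(n-1-i)}C$, in which the normally ordered contribution appears naturally as the $i=n$ boundary term. The desired identity is then exactly the Borcherds identity \eqref{eq:borcherds} specialized to $p=n$, $q=-1$, $r=0$, whose right-hand side collapses (because $\binom{0}{i}=0$ for $i\geq1$) to $A_{(n)}(B_{(-1)}C)-B_{(-1)}(A_{(n)}C)$. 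Since $p=n\in\mathbb N$ and $r=0\in\mathbb N$ with $q=-1\in\mathbb Z$, this instance of \eqref{eq:borcherds} holds without any locality hypothesis, which is precisely why Theorem \ref{th:1} is unconditional.

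The main obstacle I anticipate is the bookkeeping of the contour integration of the normally ordered pieces: one must justify the termwise Taylor expansion inside $\normOrdx{\,\cdot\,}$ and confirm that $\oint_{C_w}dx/(x-w)$ selects exactly the zeroth-order coefficient, then recognize that the resulting $(-1)$-th residue product is the correct $i=n$ boundary term completing the binomial sum. The remaining algebra — the vanishing of the second double contraction, the merging of the two sums via $k=j+1$, and the collapse of the Borcherds right-hand side at $r=0$ — is routine once the specialization $(p,q,r)=(n,-1,0)$ has been identified.
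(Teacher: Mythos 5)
Your proposal is correct and follows essentially the same route as the paper: expand both sides via the explicit contraction formulas and the residue integral \eqref{eq:sep4_4}, match coefficients of $(z-w)^{-p-1}$ to reduce the claim to \eqref{eq:mar31_1}, and recognize that identity as the Borcherds identity \eqref{eq:borcherds} at $r=0$, $q=-1$ (valid without locality since $p,r\in\mathbb{N}$). The only cosmetic difference is that the paper absorbs the normally ordered term $(A_{(p)}B)_{(-1)}C$ into the $j=0$ term of its double sum, whereas you track it separately as the $i=n$ boundary term; the bookkeeping agrees.
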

\begin{proof}
By using the formula \eqref{eq:sep4_4}
one can rewrite \eqref{eq:oct21_1} as
\begin{eqnarray*}
\sum_{p=0}^\infty \frac{A(w)_{(p)} (B(w)_{(-1)}C(w))}{(z-w)^{p+1}} &=&
\sum_{i=0}^\infty \sum_{j=0}^\infty { i+j \choose i }
\frac{  (A(w)_{(i)} B(w))_{(j-1)} C(w)      }{(z-w)^{i+j+1}   }\\
&&+
\sum_{p=0}^\infty \frac{B(w)_{(-1)} ( A(w)_{(p)} C(w)  )  }{(z-w)^{p+1}  }.
\end{eqnarray*}
Note that this is {a ``generating identity'' for the following identities}
\begin{equation}\label{eq:mar31_1}
A(w)_{(p)} (B(w)_{(-1)}C(w)) =
\sum_{i=0}^p  {p \choose i }
 (A(w)_{(i)} B(w))_{(p-i-1)} C(w) +
B(w)_{(-1)} ( A(w)_{(p)} C(w)  ) ,
\end{equation}
for any $p \in \mathbb{N}$.
Now one finds that this is \red{just} a special case of the 
Borcherds identity for non-local fields given by
Proposition \ref{th:Borcherds}.
In fact, by setting $r=0$ and $q=-1$ in \eqref{eq:borcherds}, we obtain the above result.
\qed
\end{proof}

\begin{remark}
In Ref.~\citen{Kac96} Kac presented a proof of this
special case of the Borcherds identity.
He called it the {\em non-commutative Wick formula} and 
pointed out its equivalence to the formula in Ref.~\citen{BBSS88}.
He also presented an integral formula \eqref{eq:mar29_1x},
that is another type of generating identity for \eqref{eq:mar31_1}.
\end{remark}

\subsection{Generalized Wick theorem for $\contraction{}{(AB)}{(z)(}{C}(AB)(z)C(w)$}\label{sec:3_2}
Now we present the main result of this paper in algebraic formulation.
\begin{theorem}\label{th:2}
\red{Let $A(w), B(w)$ and $C(w)$ be mutually local fields on a vector space.
Then the following relation is satisfied.}
\begin{equation}\label{eq:main2}
\contraction{}{(AB)}{(z)(}{C}
(AB)(z)C(w) 
= \frac{1}{2 \pi \sqrt{-1}}
\oint_{C_w} \frac{d x}{z-x} \left\{
\normOrdx{A(x) \contraction{}{B}{(x)}{C}B(x)C(w)} + R(B(x) \contraction{}{A}{(x)}{C}A(x) C(w)) \right\}.
\end{equation}
Here $R$ represents the OPE given by \eqref{eq:jan27_2}.
\end{theorem}
\proof
From \eqref{eq:jan27_1} we have
\begin{eqnarray}\label{eq:apr4_1}
\normOrdx{\,A(x) \contraction{}{B}{(x)}{C} B(x)C(w)}&=&
\sum_{i=0}^{\infty} \frac{\normOrd{ \,A(x) (B(w)_{(i)}C(w))}}{(x-w)^{i+1}} \nonumber\\
&=& 
\sum_{i=0}^{\infty} \sum_{j=0}^{\infty}
\frac{A(w)_{(-j-1)} (B(w)_{(i)}C(w))}{(x-w)^{i-j+1}}, 
\end{eqnarray}
where we have used Taylor's formula (Proposition \ref{prop:jul30_3}) for $A(x)$ 
\red{at the point $x=w$}, and then used the formula \eqref{eq:mar16_1}.
For the sake of simplicity we formally expanded $A(x)$ up to infinite order,
but this does not cause any problems because the regular terms will be dropped
by integration.
Actually by using \eqref{eq:sep4_4} we obtain
\begin{eqnarray}\label{eq:jan27_5}
&&\frac{1}{2 \pi \sqrt{-1}}
\oint_{C_w} \frac{d x}{z-x} \left\{
\normOrdx{A(x) \contraction{}{B}{(x)}{C}B(x)C(w)} \right\}
\nonumber\\
&&=
\sum_{i=0}^{\infty} \sum_{j=0}^{i}
\frac{A(w)_{(-j-1)} (B(w)_{(i)}C(w))}{(z-w)^{i-j+1}}\nonumber\\
&&=
\sum_{q=0}^{\infty} \sum_{j=0}^{\infty}
\frac{A(w)_{(-j-1)} (B(w)_{(q+j)}C(w))}{(z-w)^{q+1}}.
\end{eqnarray}
From \eqref{eq:jan27_2} we have
\begin{equation}
R(B(x) \contraction{}{A}{(x)}{C}A(x) C(w) ) = \bcontraction{}{B}{(x) (A(x}{)}
\contraction{B(x) (}{A}{(x)}{C}
B(x) (A(x) C(w)) + \normOrdx{\,B(x) \contraction{}{A}{(x)}{C} A(x)C(w)},
\end{equation}
where
\begin{equation}
\bcontraction{}{B}{(x) (A(x}{)}
\contraction{B(x) (}{A}{(x)}{C}
B(x) (A(x) C(w))=
\sum_{i=0}^{\infty} 
\sum_{j=0}^{\infty} 
\frac{B(w)_{(j)} (A(w)_{(i)}C(w))}{(x-w)^{i+j+2} }.
\end{equation}
Hence by using \eqref{eq:sep4_4} we obtain
\begin{eqnarray}
&&\frac{1}{2 \pi \sqrt{-1}}
\oint_{C_w} \frac{d x}{z-x} \left\{
R(B(x) \contraction{}{A}{(x)}{C}A(x) C(w) )
\right\}\nonumber\\
&&=
\sum_{i=0}^{\infty} 
\sum_{j=0}^{\infty} 
\frac{B(w)_{(j)} (A(w)_{(i)}C(w))}{(z-w)^{i+j+2} }
+
\sum_{i=0}^{\infty} \sum_{j=0}^{i}
\frac{B(w)_{(-j-1)} (A(w)_{(i)}C(w))}{(z-w)^{i-j+1}}\nonumber\\
&&=
\sum_{i=0}^{\infty} 
\sum_{j=-i-1}^{\infty} 
\frac{B(w)_{(j)} (A(w)_{(i)}C(w))}{(z-w)^{i+j+2} }
\nonumber\\
&&=
\sum_{q=0}^{\infty} 
\sum_{i=0}^{\infty} 
\frac{B(w)_{(q-i-1)} (A(w)_{(i)}C(w))}{(z-w)^{q+1} }.
\label{eq:jan27_6}
\end{eqnarray}
\red{Here we used \eqref{eq:jan27_5} for the first equality to derive the second term.}
From \eqref{eq:jan27_5} and \eqref{eq:jan27_6} we see that \red{the integral formula}
\eqref{eq:main2}
{is a ``generating identity'' for the following identities}
\begin{align}\label{eq:mar31_2}
(A(w)_{(-1)}B(w))_{(q)}C(w)=
\sum_{i=0}^{\infty}
&\left( A(w)_{(-i-1)} (B(w)_{(q+i)}C(w))
+B(w)_{(q-i-1)} (A(w)_{(i)}C(w)) \right),
\end{align}
\red{for any $q \in \mathbb{N}$.}
Now one finds that this is \red{just} a special case of the 
Borcherds identity for local fields given by 
Proposition \ref{th:Borcherds}.
In fact, by setting $p=0$ and $r=-1$ in \eqref{eq:borcherds}, we obtain the above \red{identity}.
\qed
\begin{remark}
By using \eqref{eq:mar16_1}
one can rewrite \eqref{eq:mar31_2} into the following form
\begin{align}\label{eq:oct27x_1}
(A(w)_{(-1)}B(w))_{(q)}C(w)&=
\sum_{i=0}^{\infty}
\left( \normOrd{\partial^{(i)}A(w) \, (B(w)_{(q+i)}C(w))}
+\normOrd{\partial^{(i)}B(w) \, (A(w)_{(q+i)}C(w))} \right)
\nonumber\\
&+\sum_{i=1}^q B(w)_{(i-1)} (A(w)_{(q-i)}C(w)).
\end{align}
We note that this series of identities itself was known for a long time \cite{Th95}.
\end{remark}
\section{Derivation of the Formulas by an Analytic Method}\label{sec:4}
\subsection{Admissible fields and the restricted dual space}\label{sec:4_1}
In this section we derive the generalized Wick formulas
by an analytic method.
Our argument is based on the formulation used in Appendix B of Ref.~\citen{MN97}.
\red{We shall review it in this subsection somehow in detail
to make our arguments in the subsequent subsections clearer.}
Besides the locality, we assume one additional condition of the fields
which is called the admissibility.

Let $M$ be a $\C$-vector space and $M^*$ the dual of $M$.
We denote the canonical paring by 
\begin{equation*}
\langle \quad , \quad \rangle : M^* \times M \rightarrow \C.
\end{equation*} 
In other words, $M^*$ is the set of all linear functions on $M$
which is also a $\C$-vector space,
and we write $\varphi (v) = \langle \varphi , v \rangle$ for
$v \in M, \varphi \in M^*$.
\red{Let $N$ be a subspace of $M$ and set
\begin{equation*}
N^{\perp} = \{ v^{\perp} \in M^*
\vert \langle v^{\perp} , N \rangle = 0 \}, \quad
(N^{\perp})^{\perp} = \{ v \in M
\vert \langle N^{\perp} , v \rangle = 0 \}.
\end{equation*}
Then we have $N \subset (N^{\perp})^{\perp}$ but the opposite inclusion
is not generally satisfied if $M$ is an infinite dimensional vector space.
However, in this paper we simply assume that the condition 
$N = (N^{\perp})^{\perp}$
is satisfied for any subspace of $M$.}
A subspace $M^{\vee}  \subset M^*$ is called \textbf{nondegenerate} if
there is no $u \in M$ but $u=0$ that satisfies
the condition $\langle M^{\vee} , u \rangle = 0$.
Then:
\begin{lemma}[Ref.~\citen{MN97}, Lemma B.1.1]\label{lem:mar17_2}
Let $N_m \,(m \in \Z)$ be subspaces of $M$ such that $\cdots \subset N_m \subset N_{m+1} \subset \cdots$ and $\bigcap_{m \in \Z} N_m = \{ 0 \}$.
Then there exists a nondegenerate subspace $M^{\vee} \subset M^*$ 
such that for any $v^{\vee} \in M^{\vee}$, there exists an $m \in \Z$
such that $\langle v^{\vee}, N_m \rangle = 0$.
\end{lemma}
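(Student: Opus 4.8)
The plan is to exhibit $M^\vee$ explicitly as a union of annihilators and then to extract nondegeneracy directly from the hypothesis on the closures. Concretely, I would set
\[
M^\vee := \bigcup_{m \in \Z} N_m^\perp, \qquad N_m^\perp := \{ \varphi \in M^* : \langle \varphi, N_m \rangle = 0 \}.
\]
With this definition the defining property is immediate: every $\varphi \in M^\vee$ lies in some $N_m^\perp$ by construction, so $\langle \varphi, N_m \rangle = 0$ for that very $m$. The first routine point is that $M^\vee$ is a linear subspace of $M^*$. Since $N_m \subset N_{m+1}$ implies $N_{m+1}^\perp \subset N_m^\perp$, the family $\{ N_m^\perp \}_{m \in \Z}$ is a nested chain of subspaces; hence any two elements of $M^\vee$ already lie in a common $N_m^\perp$ (take $m$ to be the smaller of their two indices), and any linear combination of them stays in that $N_m^\perp \subset M^\vee$.

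The substance of the argument is nondegeneracy, which I would prove by contraposition: given $u \in M$ with $u \neq 0$, I must produce $\varphi \in M^\vee$ with $\langle \varphi, u \rangle \neq 0$. Because $\bigcap_{m \in \Z} \overline{N}_m = \{ 0 \}$ and $u \neq 0$, there is an index $m$ with $u \notin \overline{N}_m$, and a fortiori $u \notin N_m$. Now $N_m$ is a linear subspace of $M$ not containing $u$, so the image of $u$ in the quotient $M / N_m$ is non-zero; choosing any linear functional on $M / N_m$ that does not vanish on this image and composing it with the quotient projection yields $\varphi \in M^*$ with $\langle \varphi, N_m \rangle = 0$ and $\langle \varphi, u \rangle \neq 0$. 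By construction $\varphi \in N_m^\perp \subset M^\vee$, which is exactly what is needed, so no non-zero $u$ can satisfy $\langle M^\vee, u \rangle = 0$.

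I do not expect a serious obstacle here, and in particular no Hahn--Banach-type analytic input is required: since $M^*$ is the full algebraic dual, the separating functional used in the second paragraph is obtained by the elementary device of extending a basis of $N_m$ through $u$ to a basis of $M$. The only genuinely essential point is recognizing the correct candidate for $M^\vee$ and observing that the closure hypothesis $\bigcap_m \overline{N}_m = \{ 0 \}$ is precisely the ingredient guaranteeing that every non-zero vector lies outside some $\overline{N}_m$, hence outside some $N_m$, and can therefore be detected by a functional belonging to $M^\vee$. The topology enters the statement solely through these closures; the construction of $M^\vee$ and the separation step themselves are purely algebraic.
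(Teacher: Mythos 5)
Your proof is correct, and the object you construct is in fact the same as the paper's: the paper takes a complement $P$ of $N=\bigcup_m N_m$ in $M$ and sets $M^{\vee}=\left(\bigcup_m N_m^{\vee}\right)\oplus P^*$ with $N_m^{\vee}\subset N^*$, which under the identification $(N\oplus P)^*\cong N^*\oplus P^*$ is precisely your $\bigcup_m N_m^{\perp}$, the union of the annihilators of the $N_m$ inside $M^*$. The difference lies in how nondegeneracy is established. The paper decomposes $u=u_1+u_2$, eliminates $u_2$ using $P^*$, and then invokes the double-annihilator identity $(N_m^{\vee})^{\vee}=\overline{N}_m$ together with $\bigcap_m\overline{N}_m=\{0\}$; you instead pick, for a given $u\neq 0$, an index $m$ with $u\notin\overline{N}_m$, hence $u\notin N_m$, and separate $u$ from $N_m$ by a single functional obtained by basis extension, which is legitimate since $M^*$ here is the full algebraic dual. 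Your route is more direct: it bypasses the choice of the complement $P$ and the double-annihilator step, and it exposes the fact that with the full algebraic dual only the weaker hypothesis $\bigcap_m N_m=\{0\}$ is actually used (the closures would only become essential if one restricted $M^*$ to functionals continuous for some linear topology, which neither argument does). The paper's more roundabout formulation mainly serves to stay close to the original construction in Matsuo--Nagatomo; mathematically the two proofs deliver the same $M^{\vee}$.
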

A proof is available in Ref.~\citen{MN97}.
The following one is essentially the same one but
several explanations are supplemented.
\par\noindent
\proof
Set $N = \bigcup_{m \in \Z} N_m$ and consider 
\begin{equation*}
N_m^{\perp} = \{ v^{\perp} \in N^*
\vert \langle v^{\perp} , N_m \rangle = 0 \} \subset N^*.
\end{equation*}
Then $N_m^{\perp}\, (m \in \Z)$ are subspaces of $N^*$ such that
$\cdots \subset N_{m+1}^{\perp} \subset N_{m}^{\perp} \subset \cdots$.
Take a complement $P$ of $N$ in $M$ so that $M = N \oplus P$, 
that implies $N \cap P = \{ 0 \}$.
Since $\left(\bigcup_{m \in \Z}  N_{m}^{\perp} \right)$ and $P^*$ are subspaces
of $M^*$, so is their sum space which is defined as
\begin{equation*}
\left(\bigcup_{m \in \Z}  N_{m}^{\perp} \right) + P^* =
\left\{ v_1 + v_2 \in M^* \Bigg\vert v_1 \in \left(\bigcup_{m \in \Z}  N_{m}^{\perp}\right),
v_2 \in P^* \right\}.
\end{equation*}
Since $N \cap P = \{ 0 \}$ implies $N^* \cap P^* = \{ 0 \}$ and $\left(\bigcup_{m \in \Z}  N_{m}^{\perp} \right) \subset N^*$,
this sum is actually a direct sum which we denote by
\begin{equation*}
M^{\vee} = \left(\bigcup_{m \in \Z}  N_{m}^{\perp} \right) \oplus P^* \subset M^*. 
\end{equation*}
Then this $M^{\vee}$ has the desired properties
as one sees in the followings.

Given any $v^{\vee} \in M^{\vee}$, it can be uniquely written as
$v^{\vee} = v_1 + v_2, \, v_1 \in \left(\bigcup_{m \in \Z}  N_{m}^{\perp} \right)$
and $v_2 \in P^*$.
The former implies $\langle v_1, N_m \rangle = 0$ for some $m \in \Z$, and
the latter implies $\langle v_2, N_m \rangle = 0$ for any $m \in \Z$ because $N \cap P = \{ 0 \}$.
Hence for any $v^{\vee} \in M^{\vee}$, there exists an $m \in \Z$
such that $\langle v^{\vee}, N_m \rangle = 0$.

In order to prove the nondegeneracy of $M^{\vee}$, suppose $\langle M^{\vee}, u \rangle = 0$.
Recall that any $u \in M= N \oplus P$ can be uniquely written as
$u = u_1 + u_2, \, u_1 \in N$
and $u_2 \in P$.
The nondegeneracy of $P^*$ with respect to $P$ requires $u_2 = 0$.
Now $\langle M^{\vee}, u_1 \rangle = 0$ requires
$u_1$ to be an element of $\left( N_m^{\perp} \right)^{\perp}$ for all $m \in \Z$ where
\begin{equation*}
\left( N_m^{\perp} \right)^{\perp} = \{ v \in N
\vert \langle N_m^{\perp} , v \rangle = 0 \} \subset N.
\end{equation*}
Then, since $\left( N_m^{\perp} \right)^{\perp}  = N_m$
and $\bigcap_{m \in \Z} N_m = \{ 0 \}$,
we have $u_1 =0$.
\hfill
\qed

Now we explain the notion of admissibility.
Let
\begin{equation}\label{eq:mar17_1}
A^{i}(z) = \sum_{n \in \mathbb{Z}} A^{i}_n z^{-n-1} \qquad
(1 \leq i  \leq \ell)
\end{equation}
be formal series in $({\rm End}\, M) [[z,z^{-1}]]$.
We denote by $S_\ell$ the symmetric group acting on $\{ 1, \ldots, \ell \}$.
For $u \in M, m \in \Z$ we define
\begin{equation*}
N_{u,m} = \sum_{p_1+\cdots+p_\ell = - \infty}^m \sum_{\sigma \in S_\ell}
\C A_{p_1}^{\sigma (1)} \cdots A_{p_\ell}^{\sigma (\ell)} u.
\end{equation*}
Here $\C A_{p_1}^{\sigma (1)} \cdots A_{p_\ell}^{\sigma (\ell)} u$
is the one-dimensional subspace of $M$,
and as a sum of the subspaces of a vector space, so is $N_{u,m}$.
Clearly they satisfy the condition $\cdots \subset N_{u,m} \subset N_{u,m+1} \subset \cdots$.
We say that the series \eqref{eq:mar17_1} are \textbf{admissible}
if the condition $\cap_{m \in \Z} N_{u,m}= \{ 0 \}$ is satisfied for any $u \in M$.

Suppose that the series \eqref{eq:mar17_1} are admissible.
Then from Lemma \ref{lem:mar17_2} we observe that
for any $u \in M$ there exists a nondegenerate subspace $M^{\vee}_u \subset M^*$ such that for any $v^{\vee} \in M^{\vee}_u, \sigma \in S_\ell$
the condition
\begin{equation}\label{eq:july28_1}
\langle v^{\vee} , A_{p_1}^{\sigma (1)} \cdots A_{p_\ell}^{\sigma (\ell)} u \rangle =0,
\end{equation}
is satisfied for any sufficiently small $p_1 + \cdots + p_\ell \in \Z$.
Such an $M^{\vee}_u \subset M^*$ is called a \textbf{restricted dual space} compatible with
$A^{1}(z),\dots, A^{\ell}(z)$ with respect to $u \in M$.

\subsection{Consequences of locality and admissibility}\label{sec:4_2}
In order to derive the generalized Wick formulas analytically,
we have to require the matrix elements
\begin{equation*}
\langle v^{\vee} ,  A^{1}(z_1) \cdots A^{\ell}(z_\ell) u \rangle,
\end{equation*}
which we call  the correlation functions,
to satisfy a certain property with respect to the 
permutations of the indices $1,\dots,\ell$.
Actually only the cases $\ell = 2,3$ are relevant for our task.
For this purpose we introduce:
%
\begin{proposition}\label{prop:mar18_1}
Suppose $\ell = 2$ or $3$ and
let $A^{1}(z),\dots, A^{\ell}(z)$ be admissible fields.
If they are local, then for any $u \in M, v^{\vee} \in M^{\vee}_u$, and
$\sigma \in S_\ell$, the correlation functions
\begin{equation*}
\langle v^{\vee} ,  A^{\sigma(1)}(z_{\sigma(1)}) \cdots A^{\sigma(\ell)}(z_{\sigma(\ell)}) u \rangle
\end{equation*}
are the expansions of a rational function of the form
\begin{equation*}
\frac{P(z_1,\dots,z_\ell)}{\prod_{i<j}(z_i-z_j)^{n_{ij}}},\quad
P(z_1,\dots,z_\ell) \in \C [z_1,z_1^{-1},\dots,z_\ell,z_\ell^{-1}],
\end{equation*}
that is common to all $\sigma \in S_\ell$,
into the regions satisfying $|z_{\sigma(1)}| > \dots > |z_{\sigma(\ell)}|$.
\end{proposition}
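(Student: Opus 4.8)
The plan is to treat the cases $\ell = 2$ and $\ell = 3$ in parallel, the first serving as a model for the second. For each $\sigma \in S_\ell$ write $f_\sigma = \langle v^\vee , A^{\sigma(1)}(z_{\sigma(1)}) \cdots A^{\sigma(\ell)}(z_{\sigma(\ell)}) u \rangle \in \C[[z_1^{\pm 1},\dots,z_\ell^{\pm 1}]]$. Let $m_{ij}$ be the order of locality of $A^{i}(z)$ and $A^{j}(z)$, and set $D = \prod_{i<j}(z_i - z_j)^{m_{ij}}$. The entire statement follows once I show that $D\,f_\sigma$ is one and the same Laurent polynomial $P$ for every $\sigma$: then $P/D$ is the required rational function common to all orderings, and each $f_\sigma$ is to be identified with the expansion of $P/D$ in the region $R_\sigma : |z_{\sigma(1)}| > \cdots > |z_{\sigma(\ell)}|$.

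First I would show that $D\,f_\sigma$ is independent of $\sigma$. Since $S_\ell$ is generated by adjacent transpositions, it suffices to compare two orderings differing by a swap of two neighbouring factors $A^{a}(z_a), A^{b}(z_b)$. The locality relation \eqref{eq:july6_3} gives $(z_a - z_b)^{m_{ab}} A^{a}(z_a) A^{b}(z_b) = (z_a - z_b)^{m_{ab}} A^{b}(z_b) A^{a}(z_a)$ as an operator identity, and this remains valid with the unchanged neighbouring factors inserted on either side; multiplying by the remaining factors of $D$ and pairing with $v^\vee$ and $u$ yields $D\,f_\sigma = D\,f_{\sigma'}$. Hence $P := D\,f_\sigma$ is well defined.

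The heart of the argument is that $P$ is a Laurent polynomial, and I would prove this by bounding the exponent of each $z_i$ from below and the total degree from above. For the lower bound in $z_i$ I use the ordering that places $A^{i}$ rightmost: then $A^{i}(z_i) u$ is bounded below in $z_i$ by the field property, and the operators applied afterward act only on the $M$-valued coefficients and leave the $z_i$-grading untouched, so this $f_\sigma$, hence $P = D\,f_\sigma$, is bounded below in $z_i$. Note that within a single ordering only the innermost variable enjoys a \emph{uniform} lower bound (for a non-innermost variable the bound may drift to $-\infty$ as the later exponents grow); it is precisely the freedom to choose a different optimal ordering for each $i$—legitimate because $P$ is already known to be common—that yields uniform lower bounds in all variables at once. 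For the upper bound, admissibility together with the restricted dual space $M^\vee_u$ of Sect.~\ref{sec:4_1} guarantees $\langle v^\vee , A^{\sigma(1)}_{p_1} \cdots A^{\sigma(\ell)}_{p_\ell} u \rangle = 0$ whenever $p_1 + \cdots + p_\ell$ is sufficiently small, which bounds the total degree of $f_\sigma$, and thus of $P$, from above. A monomial with every exponent bounded below and total degree bounded above ranges over a finite set, so $P$ is a Laurent polynomial.

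Finally I would identify $f_\sigma$ with the expansion $E_\sigma$ of $P/D$ in $R_\sigma$, obtained by expanding each $(z_i - z_j)^{-m_{ij}}$ with the variable of larger modulus dominating. By construction $D\,E_\sigma = P = D\,f_\sigma$, so $D\,(f_\sigma - E_\sigma) = 0$. Both $f_\sigma$ and $E_\sigma$ are bounded below in the innermost variable $z_{\sigma(\ell)}$, so their difference has only finitely many terms of negative degree in that variable, and Lemma \ref{lem:mar17_5}—applied iteratively for $\ell = 3$, clearing one diagonal factor at a time—forces $f_\sigma = E_\sigma$. I expect the main obstacle to be this last step in the $\ell = 3$ case: tracking the three factors $(z_i - z_j)^{m_{ij}}$ and checking at each stage that the finiteness hypothesis of Lemma \ref{lem:mar17_5} is genuinely met, in tandem with the distinction drawn above between the pointwise and the uniform lower bounds, which is what makes the common polynomial $P$—rather than any individual $f_\sigma$—the correct object to analyze.
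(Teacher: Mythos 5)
Your route is the paper's route: locality makes $P:=D\,f_\sigma$ independent of $\sigma$, the field property plus admissibility make $P$ a Laurent polynomial, and iterated use of Lemma \ref{lem:mar17_5} identifies each $f_\sigma$ with the expansion $E_\sigma$ of $P/D$. Your way of getting polynomiality---a lower bound on each individual exponent by choosing, for each $i$, the ordering with $A^{i}$ innermost, combined with the admissibility bound on total degree---is a clean repackaging of the same ingredients, and the observation that only the \emph{common} object $P$ enjoys uniform lower bounds in all variables is correct and worth making explicit.

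The gap is exactly where you predicted it, and it is not closed. For $\ell=3$ the factor $(z_{\sigma(1)}-z_{\sigma(2)})^{m_{12}}$ does not involve the innermost variable, so the only degree control you carry into the last step---that $f_\sigma-E_\sigma$ has bounded-below degree in $z_{\sigma(3)}$---cannot justify clearing that factor with Lemma \ref{lem:mar17_5}. You need a second bound on $f_\sigma$ itself (not on $P$): the paper's proof uses that each correlation function also has only finitely many terms of \emph{positive} degree in the outermost variable $z_{\sigma(1)}$. That bound comes from combining the admissibility estimate ($p_1+p_2+p_3$ bounded below for nonvanishing matrix elements) with Proposition \ref{prop:mar17_4} applied to the two rightmost fields (which bounds $p_2+p_3$ above)---a use of locality your write-up never invokes. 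With both bounds in hand one clears the factors in the order $(z_{\sigma(2)}-z_{\sigma(3)})$ using the innermost bound, then $(z_{\sigma(1)}-z_{\sigma(2)})$ using the outermost bound, then $(z_{\sigma(1)}-z_{\sigma(3)})$ using either; the ``innermost variable only'' bookkeeping cannot handle the middle factor. (The expansion $E_\sigma$ satisfies the $z_{\sigma(1)}$ upper bound automatically, so the missing piece is genuinely a statement about the correlation function.)
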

\proof
\red{Since $A^{\sigma(\ell)}(z_{\sigma(\ell)})$ is a field,
the correlation function $\langle v^{\vee} ,  A^{\sigma(1)}(z_{\sigma(1)}) \cdots A^{\sigma(\ell)}(z_{\sigma(\ell)}) u \rangle$
has only finitely many terms of negative degree in $z_{\sigma(\ell)}$.
It also has only finitely many terms of positive degree in $z_{\sigma(1)}$.
This is a consequence of the admissibility which implies \eqref{eq:july28_1}, and 
the locality which implies Proposition \ref{prop:mar17_4}.}
Now, take sufficiently large $n_{ij} \in \N, (i<j)$, and consider the series
\begin{equation*}
\langle v^{\vee} ,  A^{\sigma(1)}(z_{\sigma(1)}) \cdots A^{\sigma(\ell)}(z_{\sigma(\ell)}) u \rangle \prod_{i<j}(z_i-z_j)^{n_{ij}}.
\end{equation*}
Then by the locality which enables us to move any operators to the leftmost or the rightmost position and the above observation, one sees that they are equal to a Laurent polynomial $P(z_1,\dots,z_\ell)$
that is common to all $\sigma \in S_\ell$.

As an example we consider the case with
$\ell = 3$ and $\sigma = {\rm Id}$.
Let
\begin{align*}
F_0(z_1, z_2, z_3) &= \langle v^{\vee} ,  A^{1}(z_1) A^{2}(z_2) A^{3}(z_3)u \rangle
-\frac{P(z_1,z_2,z_3)}{\prod_{i<j}(z_i-z_j)^{n_{ij}}}\bigg|_{|z_1|>|z_2|>|z_3|},\\
F_1(z_1, z_2, z_3) &=F_0(z_1, z_2, z_3) (z_1-z_3)^{n_{13}},\\
F_2(z_1, z_2, z_3) &=F_1(z_1, z_2, z_3) (z_1-z_2)^{n_{12}},\\
F_3(z_1, z_2, z_3) &=F_2(z_1, z_2, z_3) (z_2-z_3)^{n_{23}}.
\end{align*}
From the above arguments
one sees that $F_3(z_1, z_2, z_3) =0$,
and that $F_i(z_1, z_2, z_3)$ has
only finitely many terms of positive degree in $z_{1}$ and negative degree in $z_{3}$
for $(0\leq i \leq 2)$.
Hence by repeated use of
Lemma \ref{lem:mar17_5} we obtain the desired result $F_0(z_1, z_2, z_3) =0$.
The other cases are proved in a similar way.
\qed

\begin{remark}
This proposition and its proof is based on Theorem B.2.2 in Ref.~\citen{MN97}, which claims
an analogous result for arbitrary $\ell$.
For the soundness of our arguments we restrict ourselves to the above cases,
and supplemented several explanations to the proof.
\end{remark}

\subsection{Generalized Wick theorem for $\langle v^{\vee} , \contraction{}{A}{(z)(}{BC}A(z)(BC)(w)u \rangle$}\label{sec:4_3}
We now present the well-known formula \eqref{eq:july3_1} under the analytic
formulation introduced in Sect.~\ref{sec:4_1}.
Let $A(z), B(z), C(z)$ be mutually local and admissible fields on a vector space $M$, and
$M^{\vee}_u \subset M^*$ be the restricted dual space compatible with them
with respect to $u \in M$.
\begin{theorem}\label{th:leftwick}
For any $u \in M, v^{\vee} \in M^{\vee}_u$ the following relation holds:
\begin{align*}
&\langle v^{\vee} , \contraction{}{A}{(z)(}{BC}
A(z)(BC)(w)  u \rangle \\
&= \frac{1}{2 \pi \sqrt{-1}}
\oint_{C_w} \frac{d x}{x-w} \langle v^{\vee} , \left\{R(\contraction{}{A}{(z)}{B}
A(z)B(x) C(w))+ R(B(x) \contraction{}{A}{(z)}{C}A(z) C(w) )\right\}u \rangle .
\end{align*}
Here $R$ represents the OPE given by \eqref{eq:jan27_2a}
and \eqref{eq:jan27_2}.
\end{theorem}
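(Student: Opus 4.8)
The plan is to give an independent \emph{analytic} derivation by contour deformation, rather than merely quoting Theorem~\ref{th:1} inside the pairing $\langle v^{\vee},\cdot\,u\rangle$ (which would also prove the statement, but misses the point of this section). The whole argument rests on Proposition~\ref{prop:mar18_1}: since $A,B,C$ are mutually local and admissible, the correlators $\langle v^{\vee},A(z)B(x)C(w)u\rangle$ and all their permutations are expansions of a single common rational function $f(z,x,w)=P(z,x,w)/\big((z-x)^{a}(z-w)^{b}(x-w)^{c}\big)$ into the various radial-ordering regions. This is what turns the formal symbol $\oint_{C_w}$ into a genuine analytic contour integral and lets us read $R$ as radial ordering (Remark~\ref{rem:mar29_8}).

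First I would rewrite the left-hand side. Inside any correlator the normally ordered product admits the contour representation
\[
(BC)(w)=\frac{1}{2\pi\sqrt{-1}}\oint_{C_w}\frac{dx}{x-w}\,R(B(x)C(w)),
\]
because the singular part of $R(B(x)C(w))$, being a sum of $(x-w)^{-i-1}$ with $i\ge 0$, contributes no residue against $(x-w)^{-1}$, while the residue of the regular part reproduces $\normOrd{B(w)C(w)}$. Taking $z$ outside the small circle $C_w$, so that $A(z)$ is radially outermost, and flattening the nested radial ordering, I obtain
\[
\langle v^{\vee},A(z)(BC)(w)u\rangle=\frac{1}{2\pi\sqrt{-1}}\oint_{C_w}\frac{dx}{x-w}\,\langle v^{\vee},R(A(z)B(x)C(w))u\rangle .
\]
The contraction on the left of the theorem is, by Definition~\ref{def:contraction}, exactly the principal part of this expression in $(z-w)$.

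The heart of the argument is to decompose the triple product according to the two fusion channels of $A(z)$. Since $f$ is rational with its only $z$-poles at $z=x$ and $z=w$, a partial-fraction split in $z$ gives
\[
R(A(z)B(x)C(w))=R(\contraction{}{A}{(z)}{B}A(z)B(x) C(w))+R(B(x)\,\contraction{}{A}{(z)}{C}A(z)C(w))+\normOrd{A(z)B(x)C(w)},
\]
where the first term collects the pole at $z=x$ (the $A$--$B$ OPE), the second the pole at $z=w$ (the $A$--$C$ OPE), and the last is regular in both differences. Because $x$ runs on $C_w$ near $w$, after the $x$-integration the regular piece is regular in $(z-w)$ and so drops out of the principal part, whereas the two channel integrals reproduce precisely the two terms on the right-hand side, as spelled out by the expansions \eqref{eq:jan27_2a}--\eqref{eq:jan27_1} together with the evaluation formula \eqref{eq:sep4_4}.

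I expect the main obstacle to be the rigorous bookkeeping in this last step: justifying the partial-fraction split at the level of operator-valued correlators rather than for the scalar $f$ alone, confirming that the pole at $x=z$ genuinely stays outside the small contour $C_w$ so that only the residue at $x=w$ survives, and, most delicately, checking that the $x$-integration promotes the spectator field correctly, turning the bare contractions into the full radially ordered products $R(\cdots)$ of \eqref{eq:jan27_2a} and \eqref{eq:jan27_2} (including their normal-ordered pieces) while landing every surviving term on a pure $(z-w)^{-k}$ singularity. The tools that make this sound are the common-rational-function property of Proposition~\ref{prop:mar18_1} and the vanishing/uniqueness Lemma~\ref{lem:mar17_5}, which pin down each series from its rational germ.
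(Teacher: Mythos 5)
Your proposal is correct in substance and ultimately rests on the same analytic mechanism as the paper's proof: Proposition~\ref{prop:mar18_1} supplies one common rational function behind all orderings of the three-point correlator, and the two terms on the right-hand side arise from separating the singularities of that function in the variable carried by $A$ into those sitting at the $B$-insertion and those at the $C$-insertion. The route through this idea is genuinely different, though. The paper starts from the algebraic double residue
\begin{equation*}
\langle v^{\vee},A(w)_{(p)}(B(w)_{(-1)}C(w))u\rangle
={\rm Res}_{y=0}\,{\rm Res}_{x=0}\,\langle v^{\vee},[A(y),[B(x),C(w)]]u\rangle\,(y-w)^{p}(x-w)^{-1},
\end{equation*}
tracks the four radially ordered correlators on explicit circles $C_{i,0}$, unifies them via Proposition~\ref{prop:mar18_1}, deforms to a double contour integral, splits the $y$-contour into $C_x+C_w$, and only then sums over $p$; you instead represent $(BC)(w)$ by a single $x$-contour of $R(B(x)C(w))$, work with the generating variable $z$ from the outset, and perform the contour splitting as a partial-fraction decomposition of the rational function in $z$. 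Your version is tidier and closer to the physics heuristics, but it concentrates all the real work into the three-term decomposition of $R(A(z)B(x)C(w))$, which you rightly flag as the delicate step. To close it you need, in addition to Lemma~\ref{lem:mar17_5} (which you cite), Dong's lemma (Proposition~\ref{prop:mar17_3}): without it the coefficients of the principal part at $z=x$ cannot be identified with the rational correlators of the fields $(A(x)_{(i)}B(x))$ against $C(w)$ that appear in \eqref{eq:jan27_2a}, and the promotion of the bare contractions to the full $R(\cdots)$ of \eqref{eq:jan27_2a}--\eqref{eq:jan27_1} is exactly where the paper invokes it together with Remark~\ref{rem:mar29_8}. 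One small correction of wording: your final remainder term is regular in $z-x$ and $z-w$ but may still carry poles at $x=w$, so its $x$-integral over $C_w$ need not vanish; it merely produces something regular at $z=w$, which suffices because both sides of the identity are pure principal parts there. With these points supplied, your argument is a sound and somewhat more economical variant of the paper's proof.
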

\proof
By the definition of the residue product \eqref{eq:oct20_1},
the following relation is satisfied for any $p \in \N$:
\begin{align*}
&\langle v^{\vee} , A(w)_{(p)} (B(w)_{(-1)}C(w)) u \rangle\\
&={\rm Res}_{y=0} {\rm Res}_{x=0} 
\langle v^{\vee} , [A(y), [B(x), C(w)]] u \rangle (y-w)^p (x-w)^{-1}.
\end{align*}
Here the factor $ (x-w)^{-1}$ should be interpreted as its expansion
in the region $|x|>|w|$ (resp.~$|w|>|x|$) if the order of
the product of the operators is $B(x)C(w)$ (resp.~$C(w)B(x)$)
in each of the four terms in $[A(y), [B(x), C(w)]]$.
Let
\begin{align}
& 0 < R_1 < R_2 < |w| < R_3 < R_4,\nonumber\\
& C_{i, \zeta} = \{ z \in \C \mid |z-\zeta| = R_i \}, \, (i=1,\dots,4),\label{eq:mar23_4}
\end{align}
and $C_\zeta$ be a contour encircling the point $\zeta\in \C$ with an infinitesimal radius.
Then we have
\begin{align}
&{\rm Res}_{y=0} {\rm Res}_{x=0} 
\langle v^{\vee} , [A(y), [B(x), C(w)]] u \rangle (y-w)^p (x-w)^{-1}\nonumber\\
&=\frac{1}{(2 \pi \sqrt{-1})^2}
\oint_{C_{4,0}}\!\!\!d y \oint_{C_{3,0}}\!\!\!d x
\langle v^{\vee} , A(y)B(x)C(w) u \rangle(y-w)^p (x-w)^{-1}\nonumber\\
&-\frac{1}{(2 \pi \sqrt{-1})^2}
\oint_{C_{4,0}}\!\!\!d y \oint_{C_{2,0}}\!\!\!d x
\langle v^{\vee} , A(y)C(w)B(x) u \rangle(y-w)^p (x-w)^{-1}\nonumber\\
&-\frac{1}{(2 \pi \sqrt{-1})^2}
\oint_{C_{1,0}}\!\!\!d y \oint_{C_{3,0}}\!\!\!d x
\langle v^{\vee} , B(x)C(w)A(y) u \rangle(y-w)^p (x-w)^{-1}\nonumber\\
&+\frac{1}{(2 \pi \sqrt{-1})^2}
\oint_{C_{1,0}}\!\!\!d y \oint_{C_{2,0}}\!\!\!d x
\langle v^{\vee} , C(w)B(x)A(y) u \rangle(y-w)^p (x-w)^{-1}.\label{eq:mar23_1}
\end{align}
According to Proposition \ref{prop:mar18_1} we see that $\langle v^{\vee} , A(y)B(x)C(w) u \rangle$
and its permutations are expansions of the same rational function
which we denote by
\begin{equation}
\langle\langle v^{\vee} , A(y)B(x)C(w) u \rangle\rangle.\label{eq:mar23_2}
\end{equation}
For instance
\begin{equation*}
\langle v^{\vee} , C(w)B(x)A(y) u \rangle =
\langle\langle v^{\vee} , A(y)B(x)C(w) u \rangle\rangle |_{|w|>|x|>|y|},
\end{equation*}
for the last term of \eqref{eq:mar23_1}, 
where the right hand side is the expansion of \eqref{eq:mar23_2} into the region
$|w|>|x|>|y|$.
Since
the integration contours for this term ($C_{1,0}$ for $y$ and $C_{2,0}$ for $x$) are within 
this region, we can
replace $\langle v^{\vee} , C(w)B(x)A(y) u \rangle$
by \eqref{eq:mar23_2}.
By exactly the same reason we can replace the three-point functions
in all of the four terms of \eqref{eq:mar23_1}
by the same rational function
\eqref{eq:mar23_2}, and can interpret the factor $(x-w)^{-1}$ as a rational function
rather than its expansions.
Now all of the four terms of \eqref{eq:mar23_1} have the same integrand
which enables us to deform the contours as
\begin{align*}
&\langle v^{\vee} , A(w)_{(p)} (B(w)_{(-1)}C(w)) u \rangle\\
&=\frac{1}{(2 \pi \sqrt{-1})^2}
\oint_{C_{2,w}}\!\!\!d y \oint_{C_{1,w}}\!\!\!d x
\langle\langle v^{\vee} , A(y)B(x)C(w) u \rangle\rangle(y-w)^p (x-w)^{-1}\\
&=\frac{1}{(2 \pi \sqrt{-1})^2}
\oint_{C_{1,w}}\!\!\!d x \oint_{C_{x}}\!\!\!d y
\langle\langle v^{\vee} , A(y)B(x)C(w) u \rangle\rangle(y-w)^p (x-w)^{-1}\\
&+\frac{1}{(2 \pi \sqrt{-1})^2}
\oint_{C_{1,w}}\!\!\!d x \oint_{C_{w}}\!\!\!d y
\langle\langle v^{\vee} , A(y)B(x)C(w) u \rangle\rangle(y-w)^p (x-w)^{-1}.
\end{align*}
Now multiplying by $(z-w)^{-p-1}$ and summing over $p$ from $0$ to infinity,
we obtain
\begin{align}
&\langle v^{\vee} , \contraction{}{A}{(z)(}{BC}
A(z)(BC)(w)  u \rangle \nonumber\\
&=\frac{1}{(2 \pi \sqrt{-1})^2}
\oint_{C_{1,w}}\!\!\!d x \oint_{C_{x}}\!\!\!d y
\langle\langle v^{\vee} , A(y)B(x)C(w) u \rangle\rangle(z-y)^{-1} (x-w)^{-1}\nonumber\\
&+\frac{1}{(2 \pi \sqrt{-1})^2}
\oint_{C_{1,w}}\!\!\!d x \oint_{C_{w}}\!\!\!d y
\langle\langle v^{\vee} , A(y)B(x)C(w) u \rangle\rangle(z-y)^{-1} (x-w)^{-1}. \label{eq:mar23_3}
\end{align}
For the second term of this expression we consider the integral
\begin{equation}\label{eq:mar24_2}
I = \frac{1}{2 \pi \sqrt{-1}}
\oint_{C_{w}}\!\!\!d y
\langle\langle v^{\vee} , A(y)B(x)C(w) u \rangle\rangle(z-y)^{-1}.
\end{equation}
Note that the contour $C_{1,w}$ for $x$ can be deformed into 
those described by the difference of $C_{4,0}$ and $C_{1,0}$.
If $x \in C_{4,0}$ we have
\begin{align*}
I &= \frac{1}{2 \pi \sqrt{-1}}
\oint_{C_{3,0}}\!\!\!d y
\langle v^{\vee} , B(x)A(y)C(w) u \rangle (z-y)^{-1}\\
&- \frac{1}{2 \pi \sqrt{-1}}
\oint_{C_{2,0}}\!\!\!d y
\langle v^{\vee} , B(x)C(w)A(y) u \rangle (z-y)^{-1}\\
&= \frac{1}{2 \pi \sqrt{-1}}
\oint_{C_{w}}\!\!\!d y
\langle v^{\vee} , B(x)R(A(y)C(w)) u \rangle (z-y)^{-1}\\
&= \langle v^{\vee} , B(x) \contraction{}{A}{(z)}{C}A(z) C(w)u \rangle.
\end{align*}
Here we used Remark \ref{rem:mar29_8} for mutually local fields $A(y)$ and $C(w)$,
our definition of OPE \eqref{eq:july29_3}, and the
formula \eqref{eq:sep4_4} with $m=1$.
In the same way if  $x \in C_{1,0}$ we obtain
{$I = \langle v^{\vee} , \contraction{}{A}{(z)}{C}A(z) C(w) B(x)u \rangle$}.
Hence the second term of \eqref{eq:mar23_3} 
multiplied by $2 \pi \sqrt{-1}$ is written as
\begin{align}
\oint_{C_{4,0}}\frac{d x}{x-w}
\langle v^{\vee} , B(x) \contraction{}{A}{(z)}{C}A(z) C(w)u \rangle
&-
\oint_{C_{1,0}}\frac{d x}{x-w}
\langle v^{\vee} , \contraction{}{A}{(z)}{C}A(z) C(w) B(x)u \rangle\nonumber\\
&=
\oint_{C_{w}}\frac{d x}{x-w}
\langle v^{\vee} , R(B(x) \contraction{}{A}{(z)}{C}A(z) C(w))u \rangle.\label{eq:mar24_3}
\end{align}
Here we used Remark \ref{rem:mar29_8} for mutually local fields
$B(x)$ and $A(w)_{(p)}C(w)$
which is a consequence of Proposition \ref{prop:mar17_3}, 
and our definition of OPE \eqref{eq:jan27_2}.
This gives rise to the second term of the desired formula.
In a similar way, one can also derive the first term.
\qed

\subsection{Generalized Wick theorem for $\langle v^{\vee} , \contraction{}{(AB)}{(z)(}{C}(AB)(z)C(w)u \rangle$}\label{sec:4_4}
Now we present the main result of this paper in analytic formulation.
\begin{theorem}\label{th:4}
For any $u \in M, v^{\vee} \in M^{\vee}_u$ the following relation holds:
\begin{align*}
&\langle v^{\vee} , \contraction{}{(AB)}{(z)(}{C}
(AB)(z)C(w) u \rangle \\
&= \frac{1}{2 \pi \sqrt{-1}}
\oint_{C_w} \frac{d x}{z-x} \langle v^{\vee} , \left\{
\normOrdx{A(x) \contraction{}{B}{(x)}{C}B(x)C(w)} + R(B(x) \contraction{}{A}{(x)}{C}A(x) C(w)) \right\}u \rangle .
\end{align*}
Here $R$ represents the OPE given by \eqref{eq:jan27_2}.
\end{theorem}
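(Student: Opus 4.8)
The plan is to follow the analytic strategy of Theorem \ref{th:leftwick}, now with the composite field $(AB)(z)=A(z)_{(-1)}B(z)$ playing the role of the outer operator and $C(w)$ the inner one. Since $A,B,C$ are mutually local and admissible, Dong's lemma (Proposition \ref{prop:mar17_3}) shows that $(AB)(z)$ and $C(z)$ are local, and Proposition \ref{prop:mar18_1} shows that $\langle v^\vee,A(y)B(x)C(w)u\rangle$ together with all its permutations are expansions of one rational function $\langle\langle v^\vee,A(y)B(x)C(w)u\rangle\rangle$. First I would rewrite the expansion coefficients of the contraction as contour integrals. For $q\ge 0$ the definition \eqref{eq:oct20_1} of the residue product gives $(AB)(w)_{(q)}C(w)={\rm Res}_{x=0}[(AB)(x),C(w)](x-w)^q$, and locality together with Remark \ref{rem:mar29_8} lets me fuse the two orderings of $(AB)(x)C(w)$ into a single small contour around $x=w$. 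Multiplying by $(z-w)^{-q-1}$ and summing over $q\ge 0$, so that $(x-w)^q$ resums to $(z-x)^{-1}$, I obtain the intermediate identity
\begin{equation*}
\langle v^\vee,\contraction{}{(AB)}{(z)(}{C}(AB)(z)C(w)u\rangle
=\frac{1}{2\pi\sqrt{-1}}\oint_{C_w}\frac{dx}{z-x}\,\langle v^\vee,R((AB)(x)C(w))u\rangle .
\end{equation*}

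Because $(z-x)^{-1}$ is holomorphic at $x=w$, the contour $\oint_{C_w}$ only sees the principal part of the integrand at $x=w$, so any two integrands that differ by a function regular at $x=w$ give the same contour integral. Hence it suffices to show that $\langle v^\vee,R((AB)(x)C(w))u\rangle$ and the two advertised terms $\langle v^\vee,\{\normOrdx{A(x)\contraction{}{B}{(x)}{C}B(x)C(w)}+R(B(x)\contraction{}{A}{(x)}{C}A(x)C(w))\}u\rangle$ have the same principal part at $x=w$. To prove this I would represent the composite by a second contour integral, $(AB)(x)=\frac{1}{2\pi\sqrt{-1}}\oint_{C_x}\frac{dy}{y-x}A(y)B(x)$, so that $\langle v^\vee,R((AB)(x)C(w))u\rangle=\frac{1}{2\pi\sqrt{-1}}\oint_{C_x}\frac{dy}{y-x}\langle\langle v^\vee,A(y)B(x)C(w)u\rangle\rangle$, with the inner $y$-contour tied to $x$. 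The singular behaviour as $x\to w$ then comes from exactly two collisions of the rational function: the pole at $x=w$ (the $B$--$C$ channel) and the pole at $y=w$ (the $A$--$C$ channel), which pinches the $y$-contour as $x\to w$.

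The decisive step is to separate these two channels by deforming the inner contour outward across the pole at $y=w$, replacing $\oint_{C_x}$ by a contour enclosing both $y=x$ and $y=w$, minus a small contour enclosing only $y=w$. The residue at $y=w$ is the $A$--$C$ contraction with $B(x)$ a spectator; resumming it against $(y-x)^{-1}$ by \eqref{eq:sep4_4} and invoking Remark \ref{rem:mar29_8} for the mutually local pair $B(x)$ and $A(w)_{(i)}C(w)$ (local by Proposition \ref{prop:mar17_3}) reconstitutes precisely $R(B(x)\contraction{}{A}{(x)}{C}A(x)C(w))$, the second term. The complementary piece carries the $B$--$C$ contraction with $A(y)$ still to be merged into the composite; since $A(y)$ approaches $x$ regularly, this merging is a plain normal ordering and assembles into $\normOrdx{A(x)\contraction{}{B}{(x)}{C}B(x)C(w)}$, the first term, up to contributions regular at $x=w$ that drop out of $\oint_{C_w}$.

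The main obstacle—and the genuine difference from Theorem \ref{th:leftwick}—is that here the two integration variables $y$ (for $A$) and $x$ (for $B$ and for the composite) are coupled through the normal-ordering kernel $(y-x)^{-1}$, whereas in Theorem \ref{th:leftwick} the variable of the outer operator and that of the composite were independent and the four orderings factored cleanly. I therefore expect the bookkeeping of the channel separation to be delicate: one must verify that the $y=w$ residue reproduces the full radial ordering $R$ of the second term, that the residual $B$--$C$ channel assembles into the bare normal ordering $\normOrdx{\;}$ of the first term rather than into its full operator product, and that all remainders are regular at $x=w$ and hence invisible to the outer contour. Establishing this channel decomposition is equivalent, through the intermediate identity above, to the special case $p=0,\,r=-1$ of the Borcherds identity already exploited in Theorem \ref{th:2}, which serves as an independent check that the split is bound to close.
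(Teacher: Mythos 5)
Your strategy is essentially the paper's: reduce everything to contour integrals of the single rational function $\langle\langle v^{\vee},A(y)B(x)C(w)u\rangle\rangle$ guaranteed by Proposition \ref{prop:mar18_1}, then separate the $A$--$C$ channel from the $B$--$C$ channel by moving the $y$-contour. The only organisational difference is the entry point: you first prove the intermediate identity with $R((AB)(x)C(w))$ and then point-split the composite, whereas the paper starts from ${\rm Res}_{x}{\rm Res}_{y}\,\langle v^{\vee},[[A(y),B(x)],C(w)]u\rangle(y-x)^{-1}(x-w)^{p}$, fuses the four operator orderings into the common rational function, and arrives at the same two-term contour decomposition \eqref{eq:mar24_1}. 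Your identification of the $y=w$ residue with $R(B(x)\contraction{}{A}{(x)}{C}A(x)C(w))$ matches the paper's treatment of its integral $J$ (which is the integral $I$ of Theorem \ref{th:leftwick} with $z$ replaced by $x$).

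The step you defer as ``delicate bookkeeping'' is, however, exactly where the paper's proof does its real work, and your heuristic for it is not yet a proof. What the residual channel naturally produces is the full OPE $R(A(y)\contraction{}{B}{(y)}{C}B(y)C(w))$, not the bare normal ordering; its doubly-contracted part (the analogue of \eqref{eq:mar24_5}) is singular as $y\to w$, so the claim that everything beyond $\normOrdx{A(x)\contraction{}{B}{(x)}{C}B(x)C(w)}$ is regular at $x=w$ and invisible to the outer contour does not follow merely from ``$A(y)$ approaches $x$ regularly.'' The paper closes this with the partial fraction
\begin{equation*}
(y-x)^{-1}(z-x)^{-1}=(z-y)^{-1}\bigl((y-x)^{-1}-(z-x)^{-1}\bigr),
\end{equation*}
which turns the first term of \eqref{eq:mar24_1} into the difference $R(A(y)\contraction{}{B}{(y)}{C}B(y)C(w))-R(A(y)\contraction{}{B}{(z)}{C}B(z)C(w))$ integrated against $dy/(z-y)$ over $C_w$; the two doubly-contracted pieces then cancel identically, the term $\normOrd{A(y)\contraction{}{B}{(z)}{C}B(z)C(w)}$ drops out because it is regular in $y$ at $y=w$, and only $\normOrd{A(y)\contraction{}{B}{(y)}{C}B(y)C(w)}$ survives. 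You need this cancellation (or an equivalent explicit one); the consistency check against the $p=0$, $r=-1$ Borcherds identity confirms the answer but cannot substitute for it if Sect.~\ref{sec:4} is to remain an independent analytic derivation.
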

%
\proof
We use the same notations in \eqref{eq:mar23_4} and \eqref{eq:mar23_2}.
By the definition of the residue product \eqref{eq:oct20_1},
the following relation is satisfied for any $p \in \N$:
\begin{align*}
&\langle v^{\vee} , (A(w)_{(-1)} B(w))_{(p)}C(w)) u \rangle\\
&={\rm Res}_{x=0} {\rm Res}_{y=0} 
\langle v^{\vee} , [[A(y), B(x)], C(w)] u \rangle (y-x)^{-1} (x-w)^{p}.
\end{align*}
Here the factor $ (y-x)^{-1}$ should be interpreted as its expansion
in the region $|y|>|x|$ (resp.~$|x|>|y|$) if the order of
the product of the operators is $A(y)B(x)$ (resp.~$B(x)A(y)$)
in each of the four terms in $[[A(y), B(x)], C(w)]$.
Then we have
\begin{align}
&{\rm Res}_{x=0} {\rm Res}_{y=0} 
\langle v^{\vee} , [[A(y), B(x)], C(w)] u \rangle (y-x)^{-1} (x-w)^{p}\nonumber\\
&=\frac{1}{(2 \pi \sqrt{-1})^2}
\oint_{C_{3,0}}\!\!\!d x \oint_{C_{4,0}}\!\!\!d y
\langle v^{\vee} , A(y)B(x)C(w) u \rangle (y-x)^{-1} (x-w)^{p}\nonumber\\
&-\frac{1}{(2 \pi \sqrt{-1})^2}
\oint_{C_{4,0}}\!\!\!d x \oint_{C_{3,0}}\!\!\!d y
\langle v^{\vee} , B(x)A(y)C(w) u \rangle (y-x)^{-1} (x-w)^{p}\nonumber\\
&-\frac{1}{(2 \pi \sqrt{-1})^2}
\oint_{C_{1,0}}\!\!\!d x \oint_{C_{2,0}}\!\!\!d y
\langle v^{\vee} , C(w)A(y)B(x) u \rangle (y-x)^{-1} (x-w)^{p}\nonumber\\
&+\frac{1}{(2 \pi \sqrt{-1})^2}
\oint_{C_{2,0}}\!\!\!d x \oint_{C_{1,0}}\!\!\!d y
\langle v^{\vee} , C(w)B(x)A(y) u \rangle (y-x)^{-1} (x-w)^{p}.\label{eq:mar23_5}
\end{align}
By exactly the same reason in the proof of Theorem \ref{th:leftwick},
we can replace the three-point functions
in all of the four terms of \eqref{eq:mar23_5}
by the same rational function
\eqref{eq:mar23_2}, and can interpret the factor $(y-x)^{-1}$ as a rational function
rather than its expansions.
Now all of the four terms of \eqref{eq:mar23_5} have the same integrand
which enables us to deform the contours as
\begin{align}
&\langle v^{\vee} , (A(w)_{(-1)} B(w))_{(p)}C(w)) u \rangle \nonumber\\
&=\frac{1}{(2 \pi \sqrt{-1})^2}
\oint_{C_{1,w}}\!\!\!d y \oint_{C_{w}}\!\!\!d x
\langle\langle v^{\vee} , A(y)B(x)C(w) u \rangle\rangle (y-x)^{-1} (x-w)^{p}\nonumber\\
&-\frac{1}{(2 \pi \sqrt{-1})^2}
\oint_{C_{1,w}}\!\!\!d x \oint_{C_{w}}\!\!\!d y
\langle\langle v^{\vee} , A(y)B(x)C(w) u \rangle\rangle (y-x)^{-1} (x-w)^{p}.
\label{eq:mar24_1a}
\end{align}
Here the first term comes from the first and third terms of 
\eqref{eq:mar23_5}, and the second term comes from its second and fourth terms.
Now multiplying by $(z-w)^{-p-1}$ and summing over $p$ from $0$ to infinity,
we obtain
\begin{align}
&\langle v^{\vee} , \contraction{}{(AB)}{(z)(}{C}
(AB)(z)C(w) u \rangle \nonumber\\
&=\frac{1}{(2 \pi \sqrt{-1})^2}
\oint_{C_{1,w}}\!\!\!d y \oint_{C_{w}}\!\!\!d x
\langle\langle v^{\vee} , A(y)B(x)C(w) u \rangle\rangle (y-x)^{-1} (z-x)^{-1}\nonumber\\
&-\frac{1}{(2 \pi \sqrt{-1})^2}
\oint_{C_{1,w}}\!\!\!d x \oint_{C_{w}}\!\!\!d y
\langle\langle v^{\vee} , A(y)B(x)C(w) u \rangle\rangle (y-x)^{-1} (z-x)^{-1}. \label{eq:mar24_1}
\end{align}
In view of the second term of \eqref{eq:mar24_1}
we consider the following integral
\begin{equation*}
J = -\frac{1}{2 \pi \sqrt{-1}}
\oint_{C_{w}}\!\!\!d y
\langle\langle v^{\vee} , A(y)B(x)C(w) u \rangle\rangle(y-x)^{-1}.
\end{equation*}
Then we see that after replacing $z$ by $x$
in \eqref{eq:mar24_2} the integral $I$ turns into this $J$.
Hence based on the same argument to deriving \eqref{eq:mar24_3}
one sees that the second term of \eqref{eq:mar24_1} is written as
\begin{equation*}
\frac{1}{2 \pi \sqrt{-1}}\oint_{C_{w}}\frac{d x}{z-x}
\langle v^{\vee} , R(B(x) \contraction{}{A}{(x)}{C}A(x) C(w))u \rangle,
\end{equation*}
that gives rise to the second term of the desired formula.

Now we consider the integral
\begin{equation*}
K = \frac{1}{2 \pi \sqrt{-1}}
\oint_{C_{w}}\!\!\!d x
\langle\langle v^{\vee} , A(y)B(x)C(w) u \rangle\rangle(y-x)^{-1} (z-x)^{-1},
\end{equation*}
in the first term of \eqref{eq:mar24_1}.
By using 
\begin{equation*}
(y-x)^{-1} (z-x)^{-1} = (z-y)^{-1}((y-x)^{-1} - (z-x)^{-1}),
\end{equation*}
and based on the same argument to deriving \eqref{eq:mar24_3}
we obtain

\begin{equation*}
K = (z-y)^{-1}
\left( \langle v^{\vee} , R(A(y) \contraction{}{B}{(y)}{C}B(y) C(w))u \rangle 
-\langle v^{\vee} , R(A(y) \contraction{}{B}{(z)}{C}B(z) C(w))u \rangle 
\right). \\
\end{equation*}
Hence
the first term of \eqref{eq:mar24_1} is written as
\begin{align*}
&
\frac{1}{2 \pi \sqrt{-1}}
\oint_{C_w} \frac{d y}{z-y} \left( \langle v^{\vee} , 
R(A(y) \contraction{}{B}{(y)}{C}B(y) C(w))
u \rangle
-\langle v^{\vee} , R(A(y) \contraction{}{B}{(z)}{C}B(z) C(w))u \rangle\right)
\\
&=\frac{1}{2 \pi \sqrt{-1}}
\oint_{C_w} \frac{d y}{z-y}\langle v^{\vee} , 
\bcontraction{}{A}{(y) (B(y}{)}
\contraction{A(y) (}{B}{(y)}{C}
A(y) (B(y) C(w))u \rangle\\
&+\frac{1}{2 \pi \sqrt{-1}}
\oint_{C_w} \frac{d y}{z-y} \langle v^{\vee} , 
\normOrd{\,A(y) \contraction{}{B}{(y)}{C} B(y)C(w)} 
u \rangle\\
&-\frac{1}{2 \pi \sqrt{-1}}
\oint_{C_w} \frac{d y}{z-y}\langle v^{\vee} , 
\bcontraction{}{A}{(y) (B(z}{)}
\contraction{A(y) (}{B}{(z)}{C}
A(y) (B(z) C(w))u \rangle\\
&-\frac{1}{2 \pi \sqrt{-1}}
\oint_{C_w} \frac{d y}{z-y} \langle v^{\vee} , 
\normOrd{\,A(y) \contraction{}{B}{(z)}{C} B(z)C(w)} 
u \rangle .
\end{align*}
The explicit expressions \eqref{eq:mar24_5} and \eqref{eq:jan27_1} tell us that
in the right hand side of this expression
the first and the third terms cancel out,
and the fourth term vanishes by itself.
Thus it remains only the second term 
that is equal to the first term of the desired formula by replacing $y$ by $x$.
\qed
%

\section{Discussion}\label{sec:6}
In the proof of Theorem \ref{th:4},
the contour deformation from \eqref{eq:mar23_5} to \eqref{eq:mar24_1a} itself
is fairly well-known.
For instance, Thielemans considered an
integral representation of a bit more general
nested residue product ((2.3.23) of Ref.~\citen{Th95}) 
which is essentially the same as
\begin{align*}
&\langle v^{\vee} , (A(w)_{(p)} B(w))_{(q)}C(w)) u \rangle\\
&={\rm Res}_{x=0} {\rm Res}_{y=0} 
\langle v^{\vee} , [[A(y), B(x)], C(w)] u \rangle (y-x)^{p} (x-w)^{q},
\end{align*}
with its right hand side rewritten in an integral expression
analogous to \eqref{eq:mar23_5}.
Then by using the above contour deformation he derived 
a specialization of the Borcherds identity ((2.3.24) of Ref.~\citen{Th95})
which is essentially the same as
\begin{align*}
(A(w)_{(p)}B(w))_{(q)}C(w)=
\sum_{i=0}^{\infty} (-1)^i {p \choose i} 
&\left( A(w)_{(-i+p)} (B(w)_{(q+i)}C(w))
\right.\nonumber\\
&\left.
-(-1)^p B(w)_{(q-i+p)} (A(w)_{(i)}C(w)) \right).
\end{align*}
Then by setting $p=-1$ and with \eqref{eq:mar16_1}
he derived its further specialization  ((3.3.18) of Ref.~\citen{Th95})
which is essentially the same as \eqref{eq:oct27x_1}.
We point out that, although he used the same contour deformation
to derive that,
no integral expression equivalent to \eqref{eq:main} has been
derived in Ref.~\citen{Th95}. 
In other words,
no {contour integral expression analogous to \eqref{eq:july3_1}
as a ``generating identity'' for \eqref{eq:oct27x_1} had been derived.
We admit that, having once obtained the integral expression \eqref{eq:main},
it is a rather straightforward task to derive 
\eqref{eq:mar31_2} or \eqref{eq:oct27x_1} from it.
However, the opposite 
direction of the derivation is fairly nontrivial. 

Sevrin et.~al.~had shown an identity ((A.7) of Ref.~\citen{STP88})
including the nested 
residue product $(A(w)_{(p)}B(w))_{(q)}C(w)$,
and claimed that for $p=-1$ it yields an integral expression 
of Wick's theorem ((A.8) of Ref.~\citen{STP88}).
Therefore, one might think that our formula \eqref{eq:main} 
has been already known.
However, their (A.8) is not equivalent to our \eqref{eq:main}
but to \eqref{eq:july3_1} for the opposite order contraction.
We suppose that their claim is wrong and
it is not their (A.7) but their (A.6) which yields
their integral expression (A.8) as the specialization,
since the latter is an identity including the nested 
residue product $A(w)_{(q)}(B(w)_{(p)}C(w))$.

While the contraction \eqref{eq:july3_2} can be regarded as
a sort of generating function of the residue products,
there is another type of generating function
which they call the {\em $\lambda$-bracket} \cite{Kac96}.
In this context, it is reasonable to give a discussion on the relation between our formula \eqref{eq:main} and the
{\em right noncommutative Wick formula} introduced in Refs.~\citen{BK03,KRW03}.
We will do this in Appendix \ref{app:A}.

From the viewpoint of practical calculations, 
our new formula \eqref{eq:main} is not indispensable for evaluating this contraction \cite{DMS97}.
However, we believe that our integral formula \eqref{eq:main} has
its own theoretical significance beyond practical usefulness because of its
simplicity.
As a formula in quantum field theory, 
we consider that
\eqref{eq:main} as well as \eqref{eq:july3_1} is written in an
expression which most physicists are more familiar with
than the other expressions.
Some examples for calculations of OPEs
by using \eqref{eq:main} will be illustrated in Appendix \ref{sec:5}.

In this paper we only dealt with the bosonic cases of integer conformal dimensions.
However, the generalization to the cases involving both bosonic and fermionic fields is straightforward \cite{T17}.

\vspace{5mm}
\noindent
{\it Acknowledgement}.
T.T. thanks Atsuo Kuniba, Yasuhiko Yamada,
Kotaro Watanabe, Michio Seto, and Kazuo Hosomichi for
valuable discussions.

\appendix
\section{Description by the $\lambda$-bracket}\label{app:A}
We show the relation between our formula \eqref{eq:main} and a
formula introduced in Refs.~\citen{BK03,KRW03}.
Firstly we introduce:
\begin{definition}[Ref.~\citen{Kac96}, (2.3.11)]
Given two series $A(w)$ and $B(w)$,
their \textbf{ $\lambda$-bracket} is
\begin{equation}
[A(w)_{\lambda} B(w)] =\sum_{m=0}^\infty
\lambda^{(m)} 
A(w)_{(m)} B(w),
\label{eq:mar20_1}
\end{equation}
where $\lambda^{(m)}  = \lambda^m/m!$, and $A(w)_{(m)} B(w)$ is the $m$-th residue product \red{defined as in}
\eqref{eq:oct20_1}.
\end{definition}
Now the left/right noncommutative Wick formulas
in Refs.~\citen{BK03,KRW03} are stated as follows.
\begin{proposition}
The following relations are satisfied.
\begin{align}
[A(w)_\lambda (BC)(w)] &=
\normOrdx{ \,[A(w)_\lambda B(w)] C(w)}+
\normOrdx{ B(w) [A(w)_\lambda C(w)]}\nonumber\\
&+
\int_0^\lambda [[A(w)_{\lambda} B(w)]_\mu C(w)] {\rm d} \mu,
\label{eq:mar29_1x}\\
[(AB)(w)_\lambda C(w) ]& =
\normOrdx{e^{\partial \frac{{\rm d}}{{\rm d} \lambda}} A(w) [B(w)_\lambda C(w)]}+
\normOrdx{e^{\partial \frac{{\rm d}}{{\rm d} \lambda}} B(w) [A(w)_\lambda C(w)]}\nonumber\\
&+
\int_0^\lambda [B(w)_\mu [A(w)_{\lambda - \mu} C(w)]] {\rm d} \mu.
\label{eq:mar20_2}
\end{align}
Here $\partial = \partial_w$ represents the differentiation that acts only on the 
nearest operator on its right side.
\end{proposition}
A proof of this proposition is given in Ref.~\citen{BK03}.
It is easy to see that the former identity \eqref{eq:mar29_1x}
is equivalent to \eqref{eq:mar31_1}, to which the integral formula \eqref{eq:july3_1}
 has been proved to be equivalent.
Here we present a heuristic proof of the latter identity \eqref{eq:mar20_2}
by showing that it is equivalent to the identity \eqref{eq:oct27x_1},
to which the integral formula
 \eqref{eq:main} has also been proved to be equivalent.

\proof
By definition, the left hand side of \eqref{eq:mar20_2} is equal to
\begin{equation*}
[(AB)(w)_\lambda C(w) ] =\sum_{q=0}^\infty
\lambda^{(q)} 
(A(w)_{(-1)}B(w))_{(q)}C(w),
\end{equation*}
that is a generating function of
the left hand side of the identity \eqref{eq:oct27x_1}.
On the other hand,
the first term of the right hand side of \eqref{eq:mar20_2} is rewritten as
\begin{align*}
\normOrdx{e^{\partial \frac{{\rm d}}{{\rm d} \lambda}} A(w) [B(w)_\lambda C(w)]}
&=\normOrdx{
\left(
\sum_{k=0}^\infty \frac{\left( \partial \frac{{\rm d}}{{\rm d} \lambda}\right)^k}{k!}
A(w)
\right)
\sum_{l=0}^\infty
\frac{\lambda^l}{l!}
B(w)_{(l)} C(w)
}\\
&=\sum_{k=0}^\infty \sum_{l=0}^\infty
\frac{1}{l!} \left( \frac{{\rm d}}{{\rm d} \lambda}\right)^k \lambda^l
\normOrdx{ \partial^{(k)} A(w) \left( B(w)_{(l)} C(w)
\right)
}\\
&=\sum_{l=0}^\infty \sum_{k=0}^l
\lambda^{(l-k)}
\normOrdx{ \partial^{(k)} A(w) \left( B(w)_{(l)} C(w)
\right)
}\\
&= \sum_{q=0}^\infty \lambda^{(q)}
\left(
\sum_{i=0}^\infty \normOrdx{ \partial^{(i)} A(w) \left( B(w)_{(q+i)} C(w)
\right)
}
\right).
\end{align*}
We see that the inner summation of the last expression is equal to 
the first term of the right hand side of the identity \eqref{eq:oct27x_1},
and that the same relation is satisfied between
their second terms
by exchanging $A$ and $B$.
Therefore,
the remaining task is to show that the same relation is satisfied between their third terms.
The third term of the right hand side of \eqref{eq:mar20_2} is rewritten as
\begin{align*}
I & :=\int_0^\lambda [B(w)_\mu [A(w)_{\lambda - \mu} C(w)]] {\rm d} \mu \\
&=\int_0^\lambda 
\sum_{k=0}^\infty \sum_{l=0}^\infty
\mu^{(l)} (\lambda - \mu)^{(k)} B(w)_{(l)} (A(w)_{(k)} C(w))
 {\rm d} \mu \\
&=\int_0^\lambda 
\sum_{q=0}^\infty \sum_{l=0}^q
\mu^{(l)} (\lambda - \mu)^{(q-l)} B(w)_{(l)} (A(w)_{(q-l)} C(w))
 {\rm d} \mu .
\end{align*}
By the definition of the residue product \eqref{eq:oct20_1},
the following relation is satisfied
\begin{align*}
B(w)_{(l)} (A(w)_{(q-l)}C(w)) 
&={\rm Res}_{x=0} {\rm Res}_{z=0} 
 [B(x), [A(z), C(w)]]   (z-w)^{q-l} (x-w)^{l},
\end{align*}
since $q \geq l \geq 0$.
By direct calculations we have
\begin{align*}
\sum_{l=0}^q
\mu^{(l)} (\lambda - \mu)^{(q-l)} (z-w)^{q-l} (x-w)^{l}
&= \frac{1}{q!} \left( \mu (x-w) + (\lambda - \mu) (z-w) 
\right)^q \\
&= \frac{1}{q!} \left( \lambda (z-w) + \mu (x-z) 
\right)^q,
\end{align*}
and 
\begin{align*}
\int_0^\lambda \left( \lambda (z-w) + \mu (x-z) 
\right)^q {\rm d} \mu
&=
\frac{\lambda^{q+1} \left( (x-w)^{q+1} - (z-w)^{q+1}
\right)}{(q+1) (x-z)} \\
&=\frac{\lambda^{q+1} }{q+1}
\sum_{i=0}^q (x-w)^i (z-w)^{q-i}.
\end{align*}
Again, by using the definition of the residue product, we have
\begin{align*}
{\rm Res}_{x=0} {\rm Res}_{z=0} 
 [B(x), [A(z), C(w)]]  (x-w)^{i} (z-w)^{q-i}
&= B(w)_{(i)} (A(w)_{(q-i)}C(w)) .
\end{align*}
Therefore
\begin{align*}
I &= \sum_{q=0}^\infty \sum_{i=0}^q \lambda^{(q+1)} B(w)_{(i)} (A(w)_{(q-i)}C(w))
= \sum_{q=0}^\infty \lambda^{(q)} \sum_{i=1}^q B(w)_{(i-1)} (A(w)_{(q-i)}C(w)).
\end{align*}
The inner summation of the last expression is equal to 
the third term of the right hand side of the identity \eqref{eq:oct27x_1}.
\qed

\section{Application to Operator Product Expansions}\label{sec:5}
\subsection{The OPE of $(TT)(z)$ and $T(w)$}\label{sec:5_1}
In order to illustrate the usefulness of our new formula \eqref{eq:main},
we shall present a few examples for calculations of the OPE.
For the sake of simplicity we omit the symbol $R$
for the radial ordering. 
Thus we write the generalized Wick theorems simply as
\eqref{eq:july3_1} and \eqref{eq:main},
and write the OPE simply as
\begin{equation}
A(y)B(z) = \contraction{}{A}{(y)}{B} A(y)B(z) + \normOrd{A(y)B(z)},
\end{equation}
instead of \eqref{eq:july29_3}.

First we consider the OPE of $(TT)(z)$ and $T(w)$,
where $T$ is the energy momentum tensor
satisfying the following OPE
\begin{equation}\label{eq:ttope}
T(x)T(w)  = \frac{{c/2}}{(x-w)^4} +\frac{2T(w)}{(x-w)^2} + \frac{\partial T(w)}{x-w}
+ \normOrd{T(x)T(w)}.
\end{equation}
From this formula and its derivative with respect to $w$
\begin{equation}
T(x)\partial T(w)  =  \frac{2c}{(x-w)^5} +\frac{4T(w)}{(x-w)^3} + \frac{3\partial  T(w)}{(x-w)^2} 
+ \frac{\partial^2 T(w)}{x-w} + \normOrd{T(x)\partial T(w)},
\end{equation}
we have
\begin{align}
T(x) \contraction{}{T}{(x)}{T}T(x) T(w) 
&= T(x) \left\{ \frac{{c/2}}{(x-w)^4} +\frac{2T(w)}{(x-w)^2} + \frac{\partial T(w)}{x-w} \right\} \nonumber\\
& =
\frac{3c}{(x-w)^6} +\frac{(c/2)T(x) + 8T(w)}{(x-w)^4} +\frac{5\partial  T(w)}{(x-w)^3} \nonumber\\
&+ \frac{2 \normOrd{T(x)T(w)}+ \partial^2 T(w)}{(x-w)^2} 
+ \frac{\normOrd{T(x)\partial T(w)}}{x-w}.
\end{align}
Then $\normOrd{T(x) \contraction{}{T}{(x)}{T}T(x) T(w)}$
is obtained from this expression by dropping all the terms
that do not contain the field $T(x)$.
Hence we have
\begin{align}
\normOrd{T(x) \contraction{}{T}{(x)}{T}T(x) T(w)}
&+T(x) \contraction{}{T}{(x)}{T}T(x) T(w) 
\nonumber\\
& =
\frac{3c}{(x-w)^6} +\frac{c T(x) + 8T(w)}{(x-w)^4} +\frac{5\partial  T(w)}{(x-w)^3} \nonumber\\
&+ \frac{4 \normOrd{T(x)T(w)}+ \partial^2 T(w)}{(x-w)^2} 
+ \frac{2 \normOrd{T(x)\partial T(w)}}{x-w}.\label{eq:mar24_6}
\end{align}
By using the Taylor expansion of $T(x)$ \red{at the point $x=w$} we obtain
\begin{align}
(\mbox{RHS of \eqref{eq:mar24_6}})
& =
\frac{3c}{(x-w)^6} +\frac{(8+c)T(w)}{(x-w)^4} +\frac{(5+c)\partial  T(w)}{(x-w)^3} \nonumber\\
&+ \frac{4 (TT)(w)+ (1+c/2)\partial^2 T(w)}{(x-w)^2} \nonumber\\
&+ \frac{(c/6) \partial^3 T(w) +  4 \normOrd{\partial T(w)T(w)}
 +  2 \normOrd{T(w)\partial T(w)}}{x-w} \nonumber\\
&+ (\mbox{regular terms}).
\end{align}
Applying \eqref{eq:main} to this expression amounts to
dropping the regular terms and replacing $x$ by $z$.
Hence we have
\begin{align}
\contraction{}{(TT)}{(z)(}{T}
(TT)(z)T(w) 
& =
\frac{3c}{(z-w)^6} +\frac{(8+c)T(w)}{(z-w)^4} +\frac{(5+c)\partial  T(w)}{(z-w)^3} \nonumber\\
&+ \frac{4 (TT)(w)+ (1+c/2)\partial^2 T(w)}{(z-w)^2} \nonumber\\
&+ \frac{(c/6) \partial^3 T(w) +  4 \normOrd{\partial T(w)T(w)}
 +  2 \normOrd{T(w)\partial T(w)}}{z-w}.
\end{align}
We note that the numerator of the last term can be written as
$(c-1) \partial^{(3)} T(w) + 3 \partial (TT)(w)$
by the following lemma.
Now one observes that our formula \eqref{eq:main} indeed
reproduces the known result (equation (6.214) of Ref.~\citen{DMS97}) for this OPE.
\begin{lemma}
\begin{equation*}
\normOrd{T(w)\partial T(w)} - \normOrd{\partial T(w)T(w)} = \partial^{(3)} T(w).
\end{equation*}
\end{lemma}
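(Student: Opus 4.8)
The plan is to prove this lemma directly from the skew symmetry relation (Proposition \ref{pr:skewsym}), which is the natural algebraic tool for comparing the two orderings $\normOrd{T(w)\partial T(w)}$ and $\normOrd{\partial T(w)T(w)}$. The key observation is that normally ordered products are nothing but residue products with index $(-1)$; more generally, by the formula \eqref{eq:mar16_1}, we have $\normOrd{\partial^{(j)}A(w) B(w)} = A(w)_{(-j-1)}B(w)$. Thus I would first translate the left-hand side into residue-product language: writing $\normOrd{T(w)\partial T(w)} = \normOrd{\partial T(w) \cdot T(w)}'$ is not immediate because of the ordering, so instead I would recognize $\normOrd{\partial T(w) T(w)} = (\partial T)(w)_{(-1)}T(w)$ directly, and treat $\normOrd{T(w)\partial T(w)}$ by the same principle after reordering.

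The central step is to apply skew symmetry \eqref{eq:skewsym} with $A = B = T$ to relate $T(w)_{(m)}T(w)$ to $\sum_{i} (-1)^{m+i+1}\partial^{(i)}(T(w)_{(m+i)}T(w))$. The crucial input is the known OPE \eqref{eq:ttope}, which tells us exactly the nonzero residue products of $T$ with itself: reading off \eqref{eq:ttope} via \eqref{eq:july3_2} gives $T_{(0)}T = \partial T$, $T_{(1)}T = 2T$, $T_{(3)}T = (c/2)I$, and $T_{(m)}T = 0$ for $m = 2$ and for $m \geq 4$, while $T_{(-1)}T = \normOrd{TT}$. Since $\normOrd{T\partial T} = T_{(-2)}T$ by \eqref{eq:mar16_1} (taking $A = T$, $j = 1$ gives $\normOrd{\partial^{(1)} T \cdot T}$, so I must be careful about which factor carries the derivative and use skew symmetry to flip the order), I would compute $T_{(-2)}T$ by applying \eqref{eq:skewsym} with $m = -2$: this expresses $T_{(-2)}T$ as an infinite sum $\sum_{i\geq 0}(-1)^{i+1}\partial^{(i)}(T_{(-2+i)}T)$, in which only finitely many terms survive because $T_{(k)}T = 0$ for $k = 2$ and $k \geq 4$.

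Concretely, the surviving terms in the skew-symmetry expansion of $T_{(-2)}T$ come from $i$ such that $-2 + i \in \{-1, 0, 1, 3\}$, i.e. $i \in \{1, 2, 3, 5\}$, and substituting the known values $T_{(-1)}T = \normOrd{TT}$, $T_{(0)}T = \partial T$, $T_{(1)}T = 2T$, $T_{(3)}T = (c/2)I$ yields an explicit combination of $\partial(\normOrd{TT})$, $\partial^{(2)}(\partial T)$, $\partial^{(3)}(2T)$, and $\partial^{(5)}((c/2)I)$. The last term vanishes since $\partial I = 0$, and I would then subtract off $\normOrd{\partial T\, T} = T_{(-2)}T$ computed the same way — or more cleanly, recognize that $\normOrd{T\partial T}$ and $\normOrd{\partial T\, T}$ are the two orderings, relate both to residue products, and take the difference so that the bulk terms cancel, leaving only the $\partial^{(3)}T(w)$ contribution. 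The main obstacle I anticipate is bookkeeping: keeping straight which of the two normally ordered expressions equals which residue product (the derivative sits on different factors), correctly tracking the binomial and sign factors $(-1)^{m+i+1}$ in \eqref{eq:skewsym}, and verifying that all terms except $\partial^{(3)}T(w)$ cancel in the difference. This is a routine but error-prone computation rather than a conceptual difficulty, so I would organize it by tabulating the nonzero $T_{(k)}T$ first and then mechanically assembling the two skew-symmetry expansions before subtracting.
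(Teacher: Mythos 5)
Your overall strategy --- skew symmetry plus the residue products of $T$ with itself read off from \eqref{eq:ttope} --- is the right toolkit and is close to the paper's, but as written the proposal has two concrete gaps. First, in your expansion of $T(w)_{(-2)}T(w)$ via \eqref{eq:skewsym} with $m=-2$ and $A=B=T$, you list the surviving indices as $i\in\{1,2,3,5\}$, i.e.\ those with $-2+i\in\{-1,0,1,3\}$. This omits $i=0$, whose contribution is $(-1)^{-2+0+1}\partial^{(0)}\bigl(T(w)_{(-2)}T(w)\bigr)=-T(w)_{(-2)}T(w)$, which is certainly nonzero (it equals $-\normOrd{\partial T(w)T(w)}$). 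For $m=-2$ the identity is self-referential: carrying the $i=0$ term to the left gives $2\,T(w)_{(-2)}T(w)=\partial\normOrd{T(w)T(w)}-\partial^{(3)}T(w)$, whereas your term list would produce twice that value for $T(w)_{(-2)}T(w)$ and hence a wrong (or inconsistent) final answer. Second, you never actually pin down $\normOrd{T(w)\partial T(w)}$, the other half of the difference. Formula \eqref{eq:mar16_1} only converts derivatives on the \emph{first} factor into residue products, so $\normOrd{T(w)\partial T(w)}=T(w)_{(-1)}\partial T(w)$ is not any $T_{(k)}T$; the phrases ``flip the order'' and ``take the difference so that the bulk terms cancel'' stand in for a missing step. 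To close it you would need, e.g., the derivation property $\partial(T_{(-1)}T)=\normOrd{\partial T\,T}+\normOrd{T\,\partial T}$ (true, but not stated in the paper), after which $\normOrd{T(w)\partial T(w)}-\normOrd{\partial T(w)T(w)}=\partial\normOrd{T(w)T(w)}-2\,T(w)_{(-2)}T(w)=\partial^{(3)}T(w)$ by the corrected computation above.

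The paper avoids both issues by specializing skew symmetry differently: it applies \eqref{eq:skewsym} with $m=-1$ to the pair $B=T$, $A=\partial T$, giving $\normOrd{T(w)\partial T(w)}=T(w)_{(-1)}\partial T(w)=\sum_{i\ge 0}(-1)^{i}\partial^{(i)}\bigl(\partial T(w)_{(i-1)}T(w)\bigr)$. There the $i=0$ term is exactly $\normOrd{\partial T(w)T(w)}$, and the coefficients $\partial T(w)_{(i-1)}T(w)$ for $i\ge 1$ are read off from the $x$-derivative of \eqref{eq:ttope} (they are $-2cI(w)$, $-4T(w)$, $-\partial T(w)$ for $i=5,3,2$ and zero otherwise), so the remaining terms sum to $-3\partial^{(3)}T(w)+4\partial^{(3)}T(w)=\partial^{(3)}T(w)$ directly, with no self-referential term and no Leibniz rule. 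If you want to keep your $A=B=T$ formulation, restore the $i=0$ term and supply the derivation property; otherwise switch to the pair $(T,\partial T)$ with $m=-1$.
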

\proof
From the derivative of \eqref{eq:ttope} with respect to $x$
\begin{equation*}
\partial T(x)T(w)  = \frac{-2c}{(x-w)^5} - \frac{4T(w)}{(x-w)^3} - \frac{\partial T(w)}{(x-w)^2}
+ \normOrd{\partial T(x)T(w)},
\end{equation*}
one sees that $\partial T(w)_{(i-1)}T(w) = -2c I(w), -4T(w), -\partial T(w)$ for $i=5,3,2$
respectively and zero for the other $i \in \Z_{>0}$.
Then by the skew symmetry \eqref{eq:skewsym} we have
\begin{align*}
\normOrd{T(w)\partial T(w)} &= 
\sum_{i=0}^\infty (-1)^{i} \partial^{(i)} (\partial T(w)_{(i-1)}T(w)) \\
&= \normOrd{\partial T(w)T(w)} + \partial^{(3)} T(w).
\end{align*}
\qed
\subsection{Sugawara construction
for the quantized currents of affine Lie algebras}\label{sec:5_2}
Let $G$ be a simple Lie group and 
$\mathfrak{g}$ be its Lie algebra.
The quantized currents $J^a(z) \, (a=1,\dots , \mbox{dim} G)$ associated with $\mathfrak{g}$
are fields satisfying the following OPE (e.g. Refs.~\citen{DMS97,ES15})
\begin{equation}\label{eq:mar25_2}
J^a (z) J^b(w) = \frac{(k/2) \delta^{ab}}{(z-w)^2} + \frac{\sqrt{-1} f^{ab}_{\quad c} J^c(w)}{z-w}
+ \normOrd{J^a (z) J^b(w) }.
\end{equation}
Here $k$ is an integer called the {level}, and $ f^{ab}_{\quad c}$ denotes the structure constant
of $\mathfrak{g}$.
We adopted the convention of Ref.~\citen{ES15} where
the indices can be raised and lowered by
$g^{ab} = (1/2)\delta^{ab}$ and $g_{ab} = 2 \delta_{ab}$ respectively.
Then the structure constant $f^{abc}$ is anti-symmetric
with respect to any pair of indices.
They satisfy the relation
$\sum_{a,b} f^{abc} f_{abd} = 2 h_G^{\vee} \delta^c_d$
where $h_G^{\vee}$ is an integer called the {dual Coxeter number}.
In \eqref{eq:mar25_2} and hereafter, any repeated indices imply 
that summations are taken over them.

We consider the OPE of $(J^b J^b)(z)$ 
and $J^a(w)$.
From \eqref{eq:mar25_2} we have
\begin{align*}
&J^b(x) \contraction{}{J^b}{(x)}{J}{J^b(x)J^a(w)}=
J^b(x)\left\{
\frac{(k/2) \delta^{ba}}{(x-w)^2} + \frac{\sqrt{-1} f^{ba}_{\quad c} J^c(w)}{x-w}
\right\}
\\
&=\frac{(k/2) J^a(x)}{(x-w)^2}+\frac{\sqrt{-1} f^{ba}_{\quad c}}{x-w}
\left\{
\frac{(k/2) \delta^{bc}}{(x-w)^2} + \frac{\sqrt{-1} f^{bc}_{\quad d} J^d(w)}{x-w}
+ \normOrd{J^b (x) J^c(w) }
\right\}\\
&=\frac{(k/2) J^a(x)}{(x-w)^2}+
\frac{h_G^{\vee} J^a(w)}{(x-w)^2}+
\frac{\sqrt{-1} f^{ba}_{\quad c} \normOrd{J^b (x) J^c(w) }  }{x-w}.
\end{align*}
Here we used $f^{ba}_{\quad c}\delta^{bc}=0$ and 
$ f^{ba}_{\quad c}f^{bc}_{\quad d} = -h_G^{\vee} \delta^a_d$.
Then $\normOrd{J^b(x) \contraction{}{J^b}{(x)}{J}{J^b(x)J^a(w)} }$ is
obtained from this expression by dropping the second term.
Hence
\begin{align*}
\normOrd{J^b(x) \contraction{}{J^b}{(x)}{J}{J^b(x)J^a(w)} }
&+J^b(x) \contraction{}{J^b}{(x)}{J}{J^b(x)J^a(w)}
\\
&=\frac{k J^a(x)}{(x-w)^2}+
\frac{h_G^{\vee} J^a(w)}{(x-w)^2}+
\frac{2 \sqrt{-1} f^{ba}_{\quad c} \normOrd{J^b (x) J^c(w) }  }{x-w}\\
&= \frac{(k +  h_G^{\vee})J^a(w)}{(x-w)^2}
+\frac{k \partial J^a(w)+ 2 \sqrt{-1} f^{ba}_{\quad c} \normOrd{J^b (w) J^c(w) }  }{x-w}\\
&+ (\mbox{regular terms}).
\end{align*}
Therefore by \eqref{eq:main} we have
\begin{align}
\contraction{}{(J^bJ^b)}{(z)(}{J}
(J^bJ^b)(z)J^a(w) 
& =\frac{(k +  h_G^{\vee})J^a(w)}{(z-w)^2}
+\frac{k \partial J^a(w)+ 2 \sqrt{-1} f^{cb}_{\quad a} \normOrd{J^b (w) J^c(w) }  }{z-w}
\nonumber\\
&=(k +  h_G^{\vee})\left\{
\frac{J^a(w)}{(z-w)^2}+
\frac{\partial J^a(w) }{z-w}
\right\}.\label{eq:mar25_3}
\end{align}
Here we used $f^{ba}_{\quad c} = f^{cb}_{\quad a}$ and the following:
\begin{lemma}
$2 \sqrt{-1} f^{cb}_{\quad a} \normOrd{J^b (w) J^c(w) } = h_G^{\vee} \partial J^a(w)$.
\end{lemma}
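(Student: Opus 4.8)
The plan is to exploit the total antisymmetry of $f^{abc}$ to isolate the antisymmetric part of the normally ordered product in its two current indices, and then to evaluate that part by the skew symmetry exactly as in the preceding lemma for $T$. First I would note that, since $f^{cb}_{\quad a}$ is antisymmetric under $b \leftrightarrow c$ while the double sum over $b,c$ is symmetric, only the antisymmetric part of the normally ordered product survives, so that
\begin{equation*}
f^{cb}_{\quad a}\normOrd{J^b(w)J^c(w)} = \frac{1}{2}f^{cb}_{\quad a}\left(\normOrd{J^b(w)J^c(w)} - \normOrd{J^c(w)J^b(w)}\right).
\end{equation*}

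Next I would compute the difference of the two orderings with the skew symmetry of Proposition \ref{pr:skewsym}. Writing $\normOrd{J^cJ^b} = J^c(w)_{(-1)}J^b(w)$ and applying \eqref{eq:skewsym} with $m=-1$, the $i=0$ term reproduces $\normOrd{J^bJ^c}$, leaving
\begin{equation*}
\normOrd{J^b(w)J^c(w)} - \normOrd{J^c(w)J^b(w)} = \sum_{i\geq 1}(-1)^{i+1}\partial^{(i)}\left(J^b(w)_{(i-1)}J^c(w)\right).
\end{equation*}
The relevant residue products are read off from the OPE \eqref{eq:mar25_2}: one has $J^b_{(0)}J^c = \sqrt{-1}\,f^{bc}_{\quad d}J^d$, $J^b_{(1)}J^c = (k/2)\delta^{bc}I$, and $J^b_{(i)}J^c = 0$ for $i \geq 2$. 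Thus the $i=1$ term contributes $\sqrt{-1}\,f^{bc}_{\quad d}\partial J^d$, the $i=2$ term vanishes because $\partial I = 0$, and all higher terms vanish, so that $\normOrd{J^bJ^c} - \normOrd{J^cJ^b} = \sqrt{-1}\,f^{bc}_{\quad d}\partial J^d(w)$. This is the exact current analogue of the relation $\normOrd{T\partial T} - \normOrd{\partial T\,T} = \partial^{(3)}T$ established just above, and the skew-symmetry step is entirely parallel.

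Substituting this back and multiplying by $2\sqrt{-1}$ turns the left-hand side into $-f^{cb}_{\quad a}f^{bc}_{\quad d}\partial J^d$, and one application of the antisymmetry $f^{cb}_{\quad a} = -f^{bc}_{\quad a}$ rewrites this as $f^{bc}_{\quad a}f^{bc}_{\quad d}\partial J^d$. The final step is the structure-constant contraction: raising and lowering indices with $g_{ab}=2\delta_{ab}$, the stated identity $\sum_{a,b}f^{abc}f_{abd}=2h_G^{\vee}\delta^c_d$ yields $f^{bc}_{\quad a}f^{bc}_{\quad d}=h_G^{\vee}\delta_{ad}$, whence $f^{bc}_{\quad a}f^{bc}_{\quad d}\partial J^d = h_G^{\vee}\partial J^a$ and the lemma follows. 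I expect the only real obstacle here to be clerical rather than conceptual, namely keeping the factors of $2$ from the metric normalization consistent in this last contraction, since the summed indices $b,c$ occupy the raised position in both factors and must be paired through $g_{ab}$ before the quoted dual-Coxeter-number identity can be applied; once this bookkeeping is fixed, everything reduces to the residue products already extracted from \eqref{eq:mar25_2}.
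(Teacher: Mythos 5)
Your proposal is correct and follows essentially the same route as the paper: both apply the skew symmetry \eqref{eq:skewsym} with $m=-1$, read off the residue products $J^b_{(0)}J^c=\sqrt{-1}\,f^{bc}_{\quad d}J^d$ and $J^b_{(1)}J^c=(k/2)\delta^{bc}I$ from \eqref{eq:mar25_2} to get $\normOrd{J^bJ^c}-\normOrd{J^cJ^b}=\sqrt{-1}\,f^{bc}_{\quad d}\partial J^d$, and then contract with the structure constants. Your preliminary antisymmetrization step (with the factor $1/2$) is just a cosmetic reordering of the paper's final step of multiplying the skew-symmetry relation by $\sqrt{-1}f^{cb}_{\quad a}$ and summing, and your factor bookkeeping in the contraction $f^{bc}_{\quad a}f^{bc}_{\quad d}=h_G^{\vee}\delta_{ad}$ checks out.
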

\proof
From \eqref{eq:mar25_2} one sees that
$J^c(w)_{(i-1)} J^b(w) = (k/2) \delta^{cb}, \sqrt{-1} f^{cb}_{\quad d} J^d(w)$
for $i=2,1$ respectively and zero for the other $i \in \Z_{>0}$.
Then by the skew symmetry \eqref{eq:skewsym} we have
\begin{align*}
\normOrd{J^b (w) J^c(w)} &= 
\sum_{i=0}^\infty (-1)^{i} \partial^{(i)} (  J^c(w)_{(i-1)} J^b(w)    ) \\
&= \normOrd{J^c (w) J^b(w)} -\sqrt{-1} f^{cb}_{\quad d} \partial J^d(w).
\end{align*}
By multiplying $\sqrt{-1} f^{cb}_{\quad a}$ and summing over $b$ and $c$
we obtain the desired result.
\qed

%
Therefore we have observed that
our formula \eqref{eq:main} indeed reproduces
the well-known relation \eqref{eq:mar25_3} which implies that if we assume the 
following
\begin{equation}\label{eq:mar29_1}
T(z) = \frac{1}{k +h_G^{\vee} } (J^b J^b)(z),
\end{equation}
to be the energy momentum tensor, then
each current $J^a(w)$ behaves as a primary field of conformal dimension $1$.

Next we consider the calculation of the
OPE of $T(z)$ and $T(w)$ by using our formula
\eqref{eq:main}.
If we used the Wick theorem \eqref{eq:july3_1} instead of
\eqref{eq:main} 
in the above calculation
we would obtain \cite{BBSS88}
\begin{equation}\label{eq:mar29_3}
\contraction{}{(J^a)}{(z)(}{T}
J^a(z) T(w) = \frac{J^a(w)}{(z-w)^2}.
\end{equation}
On the other hand
by setting $a=b$ in \eqref{eq:mar25_2} and summing over $a$ from $1$ to dim $G$
we obtain
\begin{equation}
J^a (z) J^a(w) = \frac{(k/2) {\rm dim}G}{(z-w)^2}
+ \normOrd{J^a (z) J^a(w) }.
\end{equation}
Therefore
\begin{equation}
J^a(x) \contraction{}{(J^a)}{(x)(}{T}
J^a(x) T(w) =\frac{(k/2) {\rm dim}G}{(x-w)^4}
+ \frac{\normOrd{J^a (x) J^a(w) }}{(x-w)^2}.
\end{equation}
Then $\normOrd{J^a(x) \contraction{}{(J^a)}{(x)(}{T}J^a(x) T(w)}$ is
obtained by dropping its first term.
Hence
\begin{align*}
\normOrd{J^a(x) \contraction{}{(J^a)}{(x)(}{T}
J^a(x) T(w)} &+
J^a(x) \contraction{}{(J^a)}{(x)(}{T}
J^a(x) T(w)\\
&=\frac{(k/2) {\rm dim}G}{(x-w)^4}
+ \frac{2 \normOrd{J^a (x) J^a(w) }}{(x-w)^2}\\
&=\frac{(k/2) {\rm dim}G}{(x-w)^4}
+ \frac{2 \normOrd{J^a (w) J^a(w) }}{(x-w)^2}
+\frac{2 \normOrd{\partial J^a (w) J^a(w) }}{x-w}
\\
&+ (\mbox{regular terms}).
\end{align*}
Therefore by \eqref{eq:main} we have
\begin{align}
\contraction{}{(J^aJ^a)}{(z)(}{T}
(J^aJ^a)(z)T(w) 
& =\frac{(k/2) {\rm dim}G}{(z-w)^4}
+ \frac{2 \normOrd{J^a (w) J^a(w) }}{(z-w)^2}
+\frac{2 \normOrd{\partial J^a (w) J^a(w) }}{z-w}
\nonumber\\
&=(k +  h_G^{\vee})\left\{
\frac{c/2}{(z-w)^4}
+ \frac{2 T(w)}{(z-w)^2}
+\frac{\partial T(w) }{z-w}
\right\}.\label{eq:mar29_2}
\end{align}
Here we introduced the central charge $c$ 
defined as $c = k  {\rm dim}G/(k +  h_G^{\vee})$
and used the relation $\normOrd{\partial J^a (w) J^a(w)} = \normOrd{J^a (w) \partial J^a(w)}$
that is obtained by using the skew symmetry \eqref{eq:skewsym}.
Therefore we have observed that
our formula \eqref{eq:main} indeed reproduces
the well-known relation \eqref{eq:mar29_2} which implies that 
$T(z)$ in \eqref{eq:mar29_1}
satisfies the correct OPE for the energy momentum tensor \eqref{eq:ttope}.


\begin{thebibliography}{}
%
%
\bibitem{BBSS88}F.A.~Bais, P.~Bouwknegt, M.~Surridge, and K.~Schoutens,
Nucl. Phys. \textbf{B304},  348 (1988).
%
\bibitem{DMS97}P.~Di Francesco, P.~Mathieu, and D.~S\'en\'echal,
\textit{Conformal Field Theory} 
(Springer, New York, 1997).
\bibitem{STP88}A.~Sevrin, W.~Troost, and A.~Van Proeyen,
Nucl. Phys. \textbf{B311},  465 (1988).
\bibitem{Th95}
K.~Thielemans,
PhD Thesis, Katholieke Universiteit Leuven, arXiv:hep-th/9506159v1 (1995).
\bibitem{Kac96}V.G.~Kac,
\textit{Vertex algebras for beginners} 
(AMS, Providence,1996),
University Lecture series, Vol. \textbf{10}, 
Second edition, (1998).
\bibitem{MN97}A.~Matsuo and K.~Nagatomo,
\textit{Axioms for a Vertex Algebra and the Locality of Quantum Fields}
(Mathematical Society of Japan, Tokyo, 1999),
MSJ Memoirs Vol. \textbf{4}.
%
\bibitem{BK03}B.~Bakalov and V.G.~Kac, Intern. Math. Res. Notices \textbf{2003-3}, 123 (2003).

\bibitem{KRW03}V.G.~Kac, S.S~Roan, and M.~Wakimoto,
Commun. Math. Phys. \textbf{241}, 307 (2003).
\bibitem{B86}R.E.~Borcherds, 
Proc. Natl. Acad. Sci. USA \textbf{83}, 3068 (1986).
\bibitem{FBZ04}E.~Frenkel and D.~Ben-Zvi,
\textit{Vertex Algebras and Algebraic Curves} (AMS, Providence, 2001),
Mathematical Surveys and Monographs, Vol. \textbf{88}, 
Second edition, (2004).
\bibitem{FLM88}I.~Frenkel, J.~Lepowsky, and A.~Meurman,
\textit{Vertex Operator Algebras and the Monster}
(Academic Press, San Diego, 1988)
\bibitem{Kac15}V.G.~Kac,
arXiv:1512.00821v2.
\bibitem{MS08}
M.~Schottenloher,
\textit{A Mathematical Introduction to Conformal Field Theory}
(Springer, Berlin-Heidelberg, 2008).
\bibitem{LZ95}B.H.~Lian and G.J.~Zuckerman,
J. Pure Appl. Alg.  \textbf{100}, 117 (1995).
%
\bibitem{Stanley97}
R.P.~Stanley, 
\textit{Enumerative Combinatorics (Vol. I)} 
(Cambridge University Press, Cambridge, 1997).






\bibitem{T17}
T.~Takagi,
Theor. Math. Phys. \textbf{192}, 1230 (2017).

\bibitem{ES15}T.~Eguchi and Y.~Sugawara,
Kyokeibariron \textit{(Conformal Field Theory)}
(Iwanami Shoten, Tokyo, 2015) [in Japanese].

\end{thebibliography}
%

\end{document}